\newcommand{\etal}{\textit{et al.}\xspace}
\newcommand{\ie}{\textit{i.e.,}\xspace}
\newcommand{\eg}{\textit{e.g.,}\xspace}
\mathchardef\mhyphen="2D
\newcommand{\rt}{\rightarrow}
\newcommand{\lt}{\leftarrow}
\newcommand{\framework}{\textsl{\mbox{P-McDb}}\xspace}
\newcommand{\tgrey}[1]{\textcolor{lightgray}{#1}}
\renewcommand{\paragraph}[1]{\medskip \noindent \textbf{#1.\ }}
\newtheorem{mydef}{Definition}
\newtheorem{theorem}{Theorem}
\begin{document}

\title{Privacy-preserving Searchable Databases \\ with Controllable Leakage*\thanks{*This article is an extension of our initial work appeared in the proceedings of IEEE 10th International Conference on Cloud Computing (CLOUD) 2017 under the title ``P-McDb: Privacy-preserving Search using Multi-cloud Encrypted Databases'' by Shujie Cui, Muhammad Rizwan Asghar, Steven D Galbraith, and Giovanni Russello \cite{CuiAGR17}.}}

\author{Shujie Cui, Xiangfu Song, Muhammad Rizwan Asghar, Steven D Galbraith, and Giovanni Russello

\thanks{Shujie Cui is with the Large-Scale Data \& Systems (LSDS) group in the Department of Computing, Imperial College London, UK.

Xiangfu Song is with the School of Computer Science and Technology, Shandong University, Jinan, China.

Muhammad Rizwan Asghar, Steven D Galbraith, and Giovanni Russello are with the Cyber Security Foundry, The University of Auckland, New Zealand.

They can be contacted by email: s.cui@imperial.ac.uk, bintasong@gmail.com, r.asghar@auckland.ac.nz, s.galbraith@auckland.ac.nz, and g.russello@auckland.ac.nz, respectively.}
}

\maketitle

\begin{abstract}
Searchable Encryption (SE) is a technique that allows Cloud Service Providers (CSPs) to search over encrypted datasets without learning the content of queries and records.
In recent years, many SE schemes have been proposed to protect outsourced data from CSPs.
Unfortunately, most of them leak sensitive information, from which the CSPs could still infer the content of queries and records by mounting leakage-based inference attacks, such as the \emph{count attack} and \emph{file injection attack}.

In this work, first we define the leakage in searchable encrypted databases and analyse how the leakage is leveraged in existing leakage-based attacks.
Second, we propose a \underline{P}rivacy-preserving \underline{M}ulti-\underline{c}loud based dynamic symmetric SE (SSE) scheme for relational \underline{D}ata\underline{b}ase (\framework).
\framework has minimal leakage, which not only ensures confidentiality of queries and records, but also protects the search, access, and size patterns from CSPs.
Moreover, \framework ensures both forward and backward privacy of the database.
Thus, \framework could resist existing leakage-based attacks, \eg active file/record-injection attacks.
We give security definition and analysis to show how \framework hides the aforementioned patterns.
Finally, we implemented a prototype of \framework and test it using the TPC-H benchmark dataset.
Our evaluation results show the feasibility and practical efficiency of \framework.
\end{abstract}


\section{Introduction}
\label{sec:introduction}
Cloud computing is a successful paradigm offering companies and individuals virtually unlimited
data storage and computational power at very attractive costs.
However, uploading sensitive data, such as medical, social, and financial information, to public cloud environments is still a challenging issue due to security concerns.
In particular, once such data sets and related operations are uploaded to cloud environments, the tenants must therefore trust the Cloud Service Providers (CSPs).
Yet, due to possible cloud infrastructure bugs~\cite{GunawiHLPDAELLM14}, misconfigurations~\cite{dropboxleaks} and external attacks~\cite{verizonreport}, the data could be disclosed or corrupted.
Searchable Encryption (SE) is an effective approach that allows organisations to outsource their databases and search operations to untrusted CSPs, without compromising the confidentiality of records and queries.

Since the seminal SE paper by Song \etal \cite{Song:2000:Practical}, a long line of work has investigated SE schemes with flexible functionality and better performance \cite{Asghar:2013:CCSW,Curtmola:2006:Searchable,Popa:2011:Cryptdb,Sarfraz:2015:DBMask}.
These schemes are proved to be secure in certain models under various cryptographic assumptions.
Unfortunately, a series of more recent work \cite{Islam:2012:Access,Naveed:2015:Inference,Cash:2015:leakage,Zhang:2016:All,Kellaris:ccs16:Generic,Abdelraheem:eprint17:record}
illustrates that they are still vulnerable to inference attacks, where malicious CSPs could recover the content of queries and records by (i) observing the data directly from the encrypted database and (ii) learning about the results and queries when users access the database.

From the encrypted database, the CSP might learn the frequency information of the data.
From the search operation, the CSP is able to know the \emph{access pattern}, \ie the records returned to users in response to given queries.
The CSP can also infer if two or more queries are equivalent, referred to as the \emph{search pattern}, by comparing the encrypted queries or matched data.
Last but not least, the CSP can simply log the number of matched records or files returned by each query, referred to as the \emph{size pattern}.

When an SE scheme supports insert and delete operations, it is referred to as a \emph{dynamic} SE scheme.
Dynamic SE schemes might leak extra information if they do not support \emph{forward privacy} and \emph{backward privacy} properties.
Lacking forward privacy means that the CSP can learn if newly inserted data or updated data matches previously executed queries.
Missing backward privacy means that the CSP learns if deleted data matches new queries.
Supporting forward and backward privacy is fundamental to limit the power of the CSP to collect information on how the data evolves over time.
However, only a few schemes \cite{BostMO17,ZuoSLSP18,ChamaniPPJ18,AmjadKM19} ensure both properties simultaneously.

Initiated by Islam \etal (IKK)\cite{Islam:2012:Access}, more recent works \cite{Cash:2015:leakage,Naveed:2015:Inference,Zhang:2016:All,Kellaris:ccs16:Generic} have shown that such leakage can be exploited to learn sensitive information and break the scheme.
Naveed \etal \cite{Naveed:2015:Inference} recover more than $60\%$ of the data in CryptDB \cite{Popa:2011:Cryptdb} using frequency analysis only.
Zhang \etal \cite{Zhang:2016:All} further investigate the consequences of leakage by injecting chosen files into the encrypted storage.
Based on the access pattern, they could recover a very high fraction of searched keywords by injecting a small number of known files.
Cash \etal \cite{Cash:2015:leakage} give a comprehensive analysis of the leakage in SE solutions for file collection and introduced the \emph{count attack}, where an adversary could recover queries by counting the number of matched records even if the encrypted records are semantically secure.


In this article, we investigate the leakage and attacks against relational databases\footnote{In the rest of this article, we use the term \emph{database} to refer to a relational database.} and present a \underline{P}rivacy-preserving \underline{M}ulti-\underline{c}loud based dynamic SSE scheme for \underline{D}ata\underline{b}ases (\framework).
\framework can effectively resist attacks based on the search, size or/and access patterns.
Our key technique is to use three non-colluding cloud servers: one server stores the data and performs the search operation, and the other two manage re-randomisation and shuffling of the database for protecting the access pattern.
A user with access to all servers can perform an encrypted search without leaking the search, access, or size pattern.
When updating the database, \framework also ensures both forward and backward privacy.
We give full proof of security against honest-but-curious adversaries and show how \framework can hide these patterns effectively.


The contributions of this article can be summarised as follows:
\begin{itemize}

    \item 
    We provide leakage definition specific to searchable encrypted databases, and then review how existing attacks leverage the leakage to recover queries and records. 

   \item We propose a privacy-preserving SSE database \framework, which protects the search, access, and size patterns, and achieves both forward and backward privacy, thus ensuring protection from leakage-based attacks.


  \item 
  We give full proof of security against honest-but-curious adversaries and show how \framework can effectively hide these patterns and resist leakage-based attacks.

    \item 
    Finally, we implement a prototype of \framework and show its practical efficiency by evaluating its performance on TPC-H dataset. 
\end{itemize}

The rest of this article is organised as follows.
In Section~\ref{sec:notation}, we define notations.
We present the leakage levels in SE schemes and review leakage-based attacks in Section~\ref{sec:leakage}.
In Section~\ref{sec:overview}, we provide an overview of \framework.
Solution details can be found in Section~\ref{sec:MCDB-details}.
In Section~\ref{sec:security}, we analyse the security of \framework.
Section~\ref{sec:MCDB-perf} reports the performance of \framework.
Finally, we conclude this article in Section~\ref{sec:conclusion}.

\section{Notations and Definitions}
\label{sec:notation}
\begin{table}
\centering
\caption{Notation and description}
\scriptsize
\label{tbl:notation}
\begin{tabular}{|l|l|} \hline
\textbf{Notation}   & \textbf{Description}  \\ \hline
$e$        & Data element \\ \hline
$|e|$      &The length of data element \\ \hline
$F$        & Number of attributes or fields  \\ \hline
$rcd_{id}=(e_{id, 1}, \ldots, e_{id, F})$ & The $id$-th record  \\ \hline
$N$        & Number of records in the database  \\ \hline
$DB=\{rcd_1, \ldots, rcd_N\}$             & Database  \\ \hline
$DB(e)=\{rcd_{id} | e \in rcd_{id}\}$  & Records containing $e$ in $DB$  \\ \hline
$O(e)=|DB(e)|$                         & Occurrence of $e$ in $DB$  \\ \hline
$U_f=\cup \{e_{id, f}\}$  & The set of distinct elements in field $f$  \\ \hline
$U=\{U_{1}, ..., U_F\}$   & All the distinct elements in $DB$  \\ \hline
$e^*$      & Encrypted element  \\ \hline
$Ercd$     & Encrypted record   \\ \hline
$EDB$      & Encrypted database  \\ \hline
$Q=(type, f, e)$             & Query  \\ \hline
$Q.type$   & `select' or `delete' \\ \hline
$Q.f$      & Identifier of interested field  \\ \hline
$Q.e$      & Interested keyword   \\ \hline
$EQ$        & Encrypted query  \\ \hline
$EDB(EQ)$ or $EDB(Q)$   & Search result of $Q$  \\ \hline
$(f, g)$   & Group $g$ in field $f$  \\ \hline
$\bm{E}_{f,g}$ & Elements included in group $(f,g)$ \\ \hline
$\tau_{f,g}=\max \{O(e)\}_{e \in \bm{E}_{f, g}}$ & Threshold of group $(f, g)$ \\ \hline
$(\bm{E}_{f,g}, \tau_{f,g})^*$ & Ciphertext of $(\bm{E}_{f,g}, \tau_{f,g})$ \\ \hline
\end{tabular}

We say $EQ(Ercd)=1$ when $Ercd$ matches $EQ$.
Thus, the search result $EDB(EQ) = \{Ercd_{id} | EQ(Ercd_{id})=1\}$.
\end{table}
In this section, we give formal definitions for the search, access, and size patterns, as well as for forward and backward privacy.
Before that, in Table~\ref{tbl:notation}, we define the notations used throughout this article.

\begin{mydef}[\textbf{Search Pattern}]
Given a sequence of $q$ queries $\bm{Q}=(Q_1, \ldots, Q_q)$, the search pattern of $\bm{Q}$ represents the correlation between any two queries $Q_i$ and $Q_j$, \ie $\{Q_i\stackrel{?}{=}Q_j\}_{Q_i, Q_j \in \bm{Q}}$\footnote{$Q_i=Q_j$ only when $Q_i.type=Q_j.type$, $Q_i.f=Q_j.f$, $Q_i.op=Q_j.op$ and $Q_i.e=Q_j.e$}, where $1 \leq i, j \leq q$.
\end{mydef}
%

In previous works, access pattern is generally defined as the records matching each query \cite{Curtmola:2006:Searchable}, \ie the search result.
In fact, in leakage-based attacks, such as \cite{Zhang:2016:All,Islam:2012:Access,Cash:2015:leakage}, the attackers leverage the intersection between search results (explained in Section \ref{sec:attack}) to recover queries, rather than each single search result.
Therefore, in this work, we define the intersection between search results as access pattern.

\begin{mydef}[\textbf{Access Pattern}]
The access pattern of $\bm{Q}$ represents the intersection between any two search results, \ie $\{EDB(Q_i) \cap EDB(Q_j)\}_{Q_i, Q_j \in \bm{Q}}$.
\end{mydef}
\begin{mydef}[\textbf{Size Pattern}]
The size pattern of $\bm{Q}$ represents the number of records matching each query, \ie $\{|DB(Q_i)|\}_{Q_i\in \bm{Q}}$.
\end{mydef}
\begin{mydef}[\textbf{Forward Privacy}]
Let $Ercd^t$ be an encrypted record inserted or updated at time $t$, a dynamic SE scheme achieves forward privacy, if $EQ(Ercd^t)\stackrel{?}{=}0$ is always true for any query $EQ$ issued at time $t^*$, where $t^* < t$.
\end{mydef}
\begin{mydef}[\textbf{Backward Privacy}]
Let $Ercd^t$ be an encrypted record deleted at time $t$, a dynamic SE scheme achieves backward privacy, if $EQ(Ercd^t)\stackrel{?}{=}0$ is always true for any query $EQ$ issued at time $t'$, where $t < t'$.
\end{mydef}

\section{Leakage and Attacks}
\label{sec:leakage}
\subsection{Leakage Definition}
In \cite{Cash:2015:leakage}, Cash \etal define four different levels of leakage profiles for encrypted file collections according to the method of encrypting files and the data structure supporting encrypted search.
Yet, we cannot apply these definitions to databases directly, since the structure of a file is different from that of a record in the database.
In particular, a file is a collection of related words arranged in a semantic order and tagged with a set of keywords for searching; whereas, a record consists of a set of keywords with predefined attributes.
Moreover, a keyword may occur more than once in a file, and different keywords may have different occurrences; whereas, a keyword of an attribute generally occurs only once in a record.
Inspired by the leakage levels defined in \cite{Cash:2015:leakage}, in this section, we provide our own layer-based leakage definition for encrypted databases.
Specifically, we use the terminology \emph{leakage} to refer to the information the CSP can learn about the data directly from the encrypted database and the information about the results and queries when users are accessing the database.

The simplest type of SE scheme for databases is encrypting both the records and queries with Property-Preserving Encryption (PPE), such as the DETerministic (DET).
In DET-based schemes, the same data has the same ciphertext once encrypted.
In this type of SE schemes, the CSP can check whether each record matches the query efficiently by just comparing the corresponding ciphertext; however, these solutions result in information leakage.
Specifically, in DET-based schemes, such as CryptDB \cite{Popa:2011:Cryptdb} (where the records are protected only with the PPE layer), DBMask \cite{Sarfraz:2015:DBMask}, and Cipherbase \cite{Arasu:CIDR13:Cipherbase}, before executing any query, the CSP can learn the data distribution, \ie the number of distinct elements and the occurrence of each element, directly from the ciphertext of the database.
Formally, we say the data distribution of $DB$ is leaked if $e^*$ and $e$ have the same occurrence, \ie $O(e)=O(e^*)$, for each $e \in U$.
We define this leakage profile set as $\mathcal{L}_3$:
\begin{itemize}
  \item $\mathcal{L}_3=\{O(e)\}_{e \in U}$.
\end{itemize}

The second type of SE for databases encrypts the data with semantically secure primitives, but still encrypts the queries with DET encryption.
By doing so, the data distribution is protected, and the CSP can still search the encrypted database efficiently by repeating the randomisation over the DET query and then comparing it with the randomised data, as done in \cite{Hahn:2014:Searchable}, Arx \cite{Poddar:arx:eprint16}, and most of the Public-key Encryption with Keyword Search (PEKS) systems, such as \cite{BonehCOP04} and BlindSeer \cite{Fisch:SP15:BlindSeer}.
However, after executing a query, the CSP could still learn the access and size patterns.
Moreover, due to the DET encryption for queries, the search pattern is also leaked.
Given a sequence of $q$ queries $\textbf{Q}=(Q_1, \ldots, Q_q)$, we define the leakage profile as:
\begin{itemize}
  \item $\mathcal{L}_2=\{|DB(Q_i)|, \{EDB(Q_i)\cap EDB(Q_j), Q_i\stackrel{?}{=}Q_j\}_{Q_j \in \bm{Q}}\}_{Q_i \in \textbf{Q}}$
\end{itemize}
Note that after executing queries, PPE-based databases also leak the profiles included in $\mathcal{L}_2$.

A more secure SE solution leverages Oblivious RAM (ORAM)  \cite{Goldreich:1996:SPS,Stefanov:2013:PathORAM} or combines Homomorphic Encryption (HE) \cite{Paillier:1999:Public,Gentry:2009:FHE} with oblivious data retrieval to hide the search and access patterns.
For instance, the HE-based $PPQED_a$ proposed by Samanthula \etal \cite{Samanthula:2014:Privacy} and the ORAM-based SisoSPIR given by Ishai \etal \cite{Ishai:2016:Private} hide both the search and access patterns.
Unfortunately, in both schemes, the CSP can still learn how many records are returned to the user after executing a query, \ie \emph{the communication volume}.
According to \cite{Kellaris:ccs16:Generic}, the HE-based and ORAM-based SE schemes have fixed communication overhead between the CSP and users.
Specifically, the length of the message sent from the CSP to the user as the result of query execution is proportional to the number of records matching the query.
Based on this observation, the CSP can still infer the size pattern.
Thus, the HE-based and ORAM-based SE schemes are vulnerable to size pattern-based attacks, \eg count attack \cite{Cash:2015:leakage}.
The profile leaked in HE-based and ORAM-based SE schemes can be summarised below:
\begin{itemize}
  \item $\mathcal{L}_1=\{|DB(Q_i)|\}_{Q_i \in \bm{Q}}$.
\end{itemize}


\begin{table}
\scriptsize
  \centering
   \caption{Summary of leakage profiles and attacks against encrypted databases}
   \label{tbl:summary}
  \begin{tabular}{|c|l|l|}
     \hline
    \textbf{Leakage}  & \textbf{Schemes} & \textbf{Attacks} \\ \hline
    $\mathcal{L}_3$
    &
\begin{tabular}[c]{@{}l@{}}
    CryptDB \cite{Popa:2011:Cryptdb} \\
    DBMask \cite{Sarfraz:2015:DBMask} \\
    Cipherbase \cite{Arasu:CIDR13:Cipherbase}\\
    Monomi \cite{Tu:PVLDB13:monomi} \\
    Seabed \cite{Papadimitriou:usenix2016:Seabed} \\
\end{tabular}

& \begin{tabular}[c]{@{}l@{}}
    Frequency analysis attack \\
    IKK attack \\
    Count attack \\
    Record-injection attack \\
\end{tabular} \\ \hline

$\mathcal{L}_2$
&
\begin{tabular}[c]{@{}l@{}}
     Asghar \etal \cite{Asghar:2013:CCSW} \\
     Blind Seer \cite{Fisch:SP15:BlindSeer,Pappas:BlindSeer:SP14} \\
     Arx \cite{Poddar:arx:eprint16} \\
     PPQED \cite{Samanthula:2014:Privacy}
\end{tabular}

& \begin{tabular}[c]{@{}l@{}}
    IKK attack \\
    Count attack \\
    Record-injection attack \\
\end{tabular} \\ \hline

$\mathcal{L}_1$
&
\begin{tabular}[c]{@{}l@{}}
    $PPQED_a$ \cite{Samanthula:2014:Privacy} \\
    SisoSPIR \cite{Ishai:2016:Private}
\end{tabular}

&
    Count attack
 \\ \hline 


\end{tabular}
\end{table}
\subsection{Attacks against SE Solutions}
\label{sec:attack}
In recent years, leakage-based attacks against SE schemes have been investigated in the literature.
Table \ref{tbl:summary} summarises the existing SE solutions for relational databases and the attacks applicable to them.
In the following, we illustrate how the existing leakage-based attacks could recover the data and queries.
Specifically, for each attack, we analyse its leveraged leakage, required knowledge, process, and consequences.

\subsubsection{Frequency Analysis Attack}
In \cite{Naveed:2015:Inference}, Naveed \etal describe an attack on PPE-based SE schemes, where the CSP could recover encrypted records by analysing the leaked frequency information, \ie data distribution.
To succeed in this attack, in addition to the encrypted database, the CSP also requires some auxiliary information, such as the application background, publicly available statistics, and prior versions of the targeted database.
In PPE-based SE schemes, the frequency information of an encrypted database is equal to that of the database in plaintext.
By comparing the leaked frequency information with the obtained statistics relevant to the application, the CSP could recover the encrypted data elements stored in encrypted databases.
In \cite{Naveed:2015:Inference}, Naveed \etal recovered more than $60\%$ of records when evaluating this attack with real electronic medical records using CryptDB.
We stress that this attack does not require any queries or interaction with users.
The encrypted databases with $\mathcal{L}_3$ leakage profile, \ie PPE-based databases, such as CryptDB and DBMask, are vulnerable to this attack.

\subsubsection{IKK Attack}
IKK attack proposed by Islam \etal \cite{Islam:2012:Access} is the first attack exploiting the access pattern leakage.
The goal of the IKK attack is to recover encrypted queries in encrypted file collection systems, \ie recover the plaintext of searched keywords.
Note that this attack can also be used to recover queries in encrypted databases since it does not leverage the leakage specific to file collections.
In this attack, the CSP needs to know possible keywords in the dataset and the expected probability of any two keywords appearing in a file (\ie co-occurrence probability).
Formally, the CSP guesses $m$ potential keywords and builds an $m\times m$ matrix $\tilde{C}$ whose element is the co-occurrence probability of each keyword pair.
The CSP mounts the IKK attack by observing the access pattern revealed by the encrypted queries.
Specifically, by checking if any two queries match the same files or not, the number of files containing any two searched keywords (\ie the co-occurrence rate) can be reconstructed.
Assume the CSP observes $n$ queries.
It constructs an $n \times n$ matrix $C$ with their co-occurrence rates.
By using the simulated annealing technique \cite{KirkpatrickGV83}, the CSP can find the best match between $\tilde{C}$ and $C$ and map the encrypted keywords to the guesses.
In \cite{Islam:2012:Access}, Islam \etal mounted the IKK attack over the Enron email dataset \cite{eronemail:2017} and recovered $80\%$ of the queries with certain vocabulary sizes.
The encrypted relational databases with leakage profile $\mathcal{L}_2$ or $\mathcal{L}_1$, such as Arx \cite{Poddar:arx:eprint16}, Blind Seer \cite{Pappas:BlindSeer:SP14}, and PPQED \cite{Samanthula:2014:Privacy}, are also vulnerable to the IKK attack.


\subsubsection{File-injection and Record-injection Attack}
The file-injection attack \cite{Zhang:2016:All} is an active attack mounted on encrypted file collections, which is also named as \emph{chosen-document attack} in \cite{Cash:2015:leakage}.
The file-injection attack attempts to recover encrypted queries by exploiting access pattern in encrypted file storage.
More recently, Abdelraheem \etal \cite{Abdelraheem:eprint17:record} extended this attack to encrypted databases and defined it as \emph{record-injection attack}.
Compared with the IKK and count attack (will be discussed in Section \ref{subsec:conut}), much less auxiliary knowledge is required: the CSP only needs to know the keywords universe of the system.
In \cite{Zhang:2016:All}, Zhang \etal presented the first concrete file-injection attack and showed that the encrypted queries can be revealed with a small set of injected files.
Specifically, in this attack, the CSP (acting as an active attacker) sends files composed of the keywords of its choice, such as emails, to users who then encrypt and upload them to the CSP, which are called \emph{injected files}.
If no other files are uploaded simultaneously, the CSP can easily know the storage location of each injected file.
Moreover, the CSP can check which injected files match the subsequent queries.
Given enough injected files with different keyword combinations, the CSP could recover the keyword included in a query by checking the search result.
The encrypted databases with $\mathcal{L}_2$ or $\mathcal{L}_3$ leakage profiles are vulnerable to this attack.
Although some works \cite{BostMO17,ZuoSLSP18,ChamaniPPJ18,AmjadKM19} ensure both forward and backward privacy, they are still vulnerable to the file-injection attack due to the leakage of access pattern.
That is, after searching, the attacker could still learn the intersections between previous insert queries and the search result of current queries.

\subsubsection{Count and Relational-count Attack}
\label{subsec:conut}
The count attack is proposed by Cash \etal in \cite{Cash:2015:leakage} to recover encrypted queries in file storage systems based on the access and size patterns leakage.
In \cite{Abdelraheem:2017:Seachable}, Abdelraheem \etal have applied this attack to databases and named it a \emph{relational-count attack}.
As in the IKK attack scenario, the CSP is also assumed to know an $m \times m$ matrix $\tilde{C}$, where its entry $\tilde{C}[w_i, w_j]$ holds the co-occurrence rate of keyword $w_i$ and $w_j$ in the targeted dataset.
In order to improve the attack efficiency and accuracy, the CSP is assumed to know, for each keyword $w$, the number of matching files $count(w)$ in the targeted dataset.
The CSP mounts the count attack by counting the number of files matching each encrypted query.
For an encrypted query, if the number of its matching files is unique and equals to a known $count(w)$, the searched keyword must be $w$.
However, if the result size of a query $EQ$ is not unique, all the keywords with $count(w)=|EDB(EQ)|$ could be the candidates.
Recall that the CSP can construct another matrix $C$ that represents the observed co-occurrence rate between any two queries based on the leakage of access pattern.
By comparing $C$ with $\tilde{C}$, the candidates for the queries with non-unique result sizes can be reduced.
With enough recovered queries, it is possible to determine the keyword of $EQ$.
In \cite{Cash:2015:leakage}, Cash \etal tested the count attack against Enron email dataset and successfully recovered almost all the queries.
The SE solutions for databases with leakage profiles above $\mathcal{L}_1$ are vulnerable to this attack.

\subsubsection{Reconstruction Attack}
In ORAM-based systems, such as SisoSPIR proposed by Ishai \etal \cite{Ishai:2016:Private}, the size and access patterns are concealed.
Unfortunately, Kellaris \etal \cite{Kellaris:ccs16:Generic} observe that the ORAM-based systems have fixed communication overhead between the CSP and users, where the length of the message sent from the CSP to the user as the result of a query is proportional to the number of records matching the query.
That is, for a query $Q$, the size of the communication sent from the CSP to the user is $\alpha |DB(Q)|+ \beta$, where $\alpha$ and $\beta$ are two constants.
In theory, by giving two (query, result) pairs, the CSP can derive $\alpha$ and $\beta$, and then infer the result sizes of other queries.
In \cite{Kellaris:ccs16:Generic}, Kellaris \etal present the \emph{reconstruction attack} that exploits the leakage of communication volume, and could reconstruct the attribute names in encrypted databases supporting range queries.
In this attack, the CSP only needs to know the underlying query distribution prior to the attack.
Their experiment illustrated that after a certain number of queries, all the attributes can be recovered in a few seconds.
Since we focus on equality queries in this work, we do not give the attack details here.
Nonetheless, after recovering the size pattern for each query, the CSP could also mount the count attack on equality queries.
The SE schemes with $\mathcal{L}_1$ leakage profile are vulnerable to this attack.


\section{Overview of \framework}
\label{sec:overview}
In this work, we propose \framework, a multi-cloud based dynamic SSE scheme for databases that can resist the aforementioned leakage-based attacks. 
Specifically, our scheme not only hides the frequency information of the database, but also protects the size, search, and access patterns.
Moreover, it ensures both forward and backward privacy when involving insert and delete queries.
Comparing with the existing SE solutions, \framework has the smallest leakage.
In this section, we define the system and threat model, and illustrate the techniques used in \framework at high-level.

\subsection{System Model}
\label{subsec:sm}
\begin{figure}
  \centering
  \includegraphics[width=.5\textwidth]{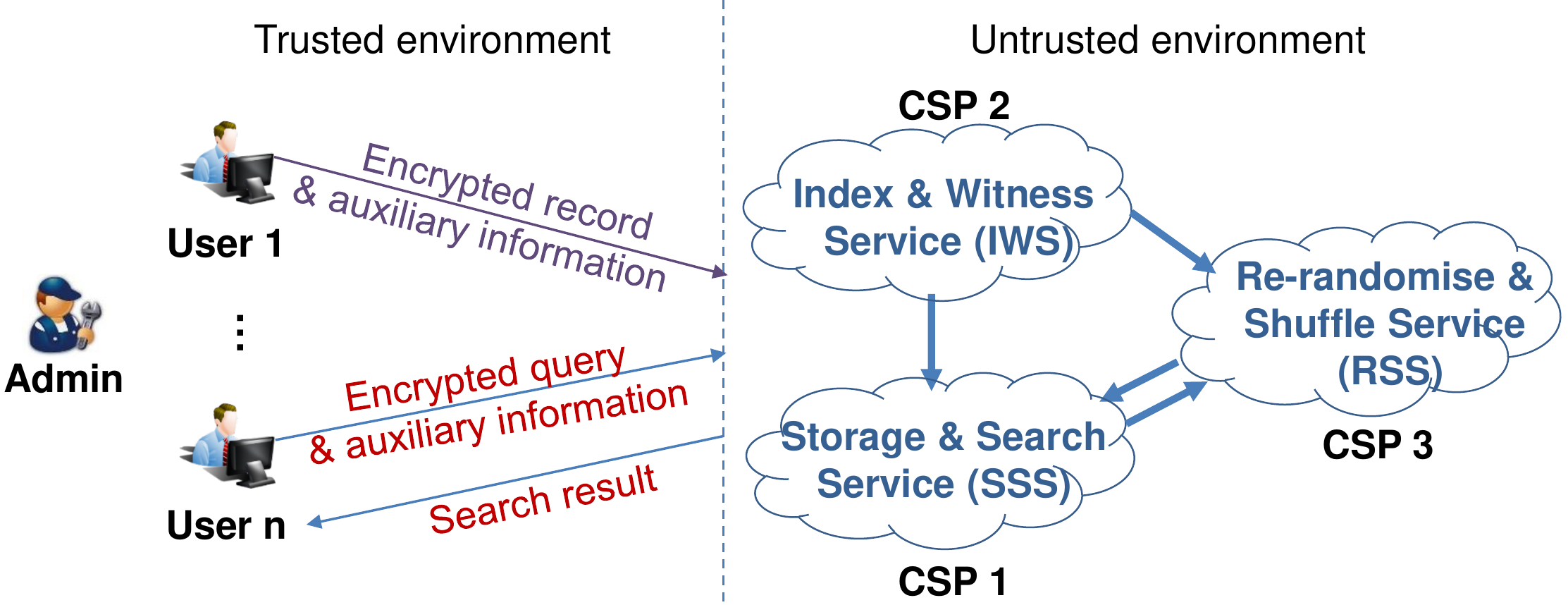}\\
  \caption{An overview of \framework:
  Users can upload records and issue queries.
  The SSS, IWS, and RSS represent independent CSPs.
  The SSS stores encrypted records and executes queries.
  The IWS stores index and auxiliary information, and provides witnesses to the SSS for performing encrypted search.
  After executing each query, the SSS sends searched records to the RSS for shuffling and re-randomising to protect patterns privacy.}
  \label{Fig:arch}
\end{figure}
In the following, we define our system model to describe the entities involved in \framework, as shown in Fig.~\ref{Fig:arch}:

 \begin{itemize}

   \item \textbf{Admin}: An admin is responsible for the setup and maintenance of databases, user management as well as specification and deployment of access control policies.

   \item \textbf{User}: A user can issue insert, select, delete, and update queries to read and write the database according to the deployed access control policies.
       \framework allows multiple users to read and write the database.

   \item \textbf{Storage and Search Service (SSS)}:
   It provides encrypted data storage, executes encrypted queries, and returns matching records in an encrypted manner.

   \item \textbf{Index and Witness Service (IWS)}:
    It stores the index and auxiliary information, and provides witnesses to the SSS for retrieving data.
    The IWS has no access to the encrypted data.

   \item \textbf{Re-randomise and Shuffle Service (RSS)}:
    After executing each query, it re-randomises and shuffles searched records to achieve the privacy of access pattern.
    The RSS does not store any data.

 \end{itemize}
    Each of the SSS, IWS, and RSS is deployed on the infrastructure managed by CSPs that are in conflict of interest. 
    According to the latest report given by RightScale \cite{RightScale:2016:report}, organisations are using more than three public CSPs on average, which means the schemes based on multi-cloud are feasible for most organisations. 
    The CSPs have to ensure that there is a two-way communication between any two of them, but our model assumes there is no collusion between the CSPs.

\subsection{Threat Model}
\label{subsec:tm}
We assume the admin is fully trusted.
All the users are only assumed to securely store their keys and the data.

The CSPs hosting the SSS, IWS, and RSS are modelled as honest-but-curious.
More specifically, they honestly perform the operations requested by users according to the designated protocol specification.
However, as mentioned in the above leakage-based attacks, they are curious to gain knowledge of records and queries by 1) analysing the outsourced data, 2) analysing the information leaked when executing queries, 3) and injecting malicious records.
As far as we know, \framework is the first SE scheme that considers active CSPs that could inject malicious records. 
Moreover, as assumed in \cite{Samanthula:2014:Privacy,Hoang:2016:practical,Stefanov:CCS2013:Multi-cloud}, we also assume the CSPs do not collude.
In other words, we assume an attacker could only compromise one CSP. 
In practice, any three cloud providers in conflict of interest, such as Amazon S3, Google Drive, and Microsoft Azure, could be considered since they may be less likely to collude in an attempt to gain information from their customers. 

We assume there are mechanisms in place for ensuring data integrity and availability of the system.

\subsection{Approach Overview}
\label{subsec:appo}
\framework aims at hiding the search, access, and size patterns.
\framework also achieves both backward and forward privacy.
We now give an overview of our approach.

To protect the search pattern, \framework XORs the query with a nonce, making identical queries look different once encrypted (\ie the encrypted query is semantically secure).
However, the CSP may still infer the search pattern by looking at the access pattern.
Specifically, the CSP can infer that two queries are equivalent if the same records are returned.
To address this issue, after executing each query, we shuffle the locations of the searched records.
Moreover, we re-randomise their ciphertexts, making them untraceable.
In this way, even if a query equivalent to the previous one is executed, the CSP will see a new set of records being searched and returned, and cannot easily infer the search and access pattern.

Another form of leakage is the size pattern, where the CSP can learn the number of records returned after performing a query, even after shuffling and re-randomisation.
Moreover, the CSP can guess the search pattern from the size pattern.
Specifically, the queries matching different numbers of records must be different, and the queries matching the same number of records could be equivalent.
To protect the size pattern, we introduce a number of dummy records that look exactly like the real ones and could match queries.
Consequently, the search result for each query will contain a number of dummy records making it difficult for the CSP to identify the actual number of real records returned by a query.

To break the link between size and search pattern, our strategy is to ensure all queries always match the same number of records, and the concrete method is to pad all the data elements in each field into the same occurrence with dummy records.
By doing so, the size pattern is also protected from the communication volume since there is no fixed relationship between them.
However, a large number of dummy records might be required for the padding.
To reduce required dummy records and ensure \framework's performance, we virtually divide the distinct data elements into groups and only pad the elements in the same group into the same occurrence.
By doing so, the queries searching values in the same group will always match the same number of records.
Then, the CSP cannot infer their search pattern.
Here we clarify that the search pattern is not fully protected in \framework.
Specifically, the CSP can still tell the queries are different if their search results are in different groups.

\framework also achieves forward and backward privacy.
Our strategy is to blind records also with nonces and re-randomise them using fresh nonces after executing each query.
Only queries that include the current nonce could match records.
In this way, even if a malicious CSP tries to use previously executed queries with old nonces, they will not be able to match the records in the dataset, ensuring forward privacy.
Similarly, deleted records (with old nonces) will not match newly issued queries because they use different nonces.

The details and algorithms of our scheme will be discussed in the following section.

\section{Solution details}
\label{sec:MCDB-details}
\begin{table}[htp]
\scriptsize
\centering
\caption{Data representation in \framework}
\subtable[\emph{Staff}]
{
\label{Tbl:mcdb-staff}
\begin{tabular}{|c|c|}\hline
\textbf{Name} & \textbf{Age} \\\hline
Alice & 27    \\\hline
Anna  & 30    \\\hline
Bob   & 27    \\\hline
Bill  & 25    \\\hline
Bob   & 33    \\\hline
\end{tabular}
}
\subtable[GDB on the IWS]
{
\label{Tbl:mcdb-engid}
\begin{tabular}{|c|c|c|c|}\hline
 \textbf{GID}    & \textbf{IL}        &$(\bm{E}, \tau)^*$         \\ \hline
 $(1, g_{1})$    & $\{1, 2\}$         & $(\{Alice, Anna\}, 1)^*$  \\\hline
 $(1, g_{2})$    & $\{3, 4, 5, 6\}$   & $(\{Bob, Bill\}, 2)^*$    \\\hline
 $(2, g'_{1})$   & $\{1, 3, 4, 6\}$   & $(\{25, 27\}, 2)^*$       \\\hline
 $(2, g'_{2})$   & $\{2, 5\}$         & $(\{30, 33\}, 1)^*$       \\\hline
\end{tabular}
}
\subtable[NDB on the IWS]
{
\label{Tbl:mcdb-nonceI}
\begin{tabular}{|c|c|c|}\hline
\textbf{id} &$\bm{seed}$ & $\bm{nonce}$ \\ \hline
1 & $seed_{1}$  & $\bm{n_1}$ \\\hline
2 & $seed_{2}$  & $\bm{n_2}$ \\\hline
3 & $seed_{3}$  & $\bm{n_3}$ \\\hline
4 & $seed_{4}$  & $\bm{n_4}$ \\\hline
5 & $seed_{5}$  & $\bm{n_5}$ \\\hline
6 & $seed_{6}$  & $\bm{n_6}$ \\\hline
\end{tabular}
}
\subtable[EDB on the SSS]
{
\label{Tbl:mcdb-enstaff}
\begin{tabular}{|c|c|c|c|c|}\hline
\textbf{ID} & \textbf{1}  & \textbf{2} & \textbf{Tag} \\\hline
1  & $SE(Alice)$  & $SE(27)$  & $tag_1$  \\\hline
2  & $SE(Anna)$   & $SE(30)$  & $tag_2$  \\\hline
3  & $SE(Bob)$    & $SE(27)$  & $tag_3$  \\\hline
4  & $SE(Bill)$   & $SE(25)$  & $tag_4$  \\\hline
5  & $SE(Bob)$    & $SE(33)$  & $tag_5$  \\\hline
6  & $SE(Bill)$   & $SE(25)$  & $tag_6$   \\\hline
\end{tabular}
}
\label{Tbl:mcdb-store}

 \subref{Tbl:mcdb-staff} A sample \emph{Staff} table.
 \subref{Tbl:mcdb-engid} GDB, the group information, is stored on the IWS.
 \subref{Tbl:mcdb-nonceI} NDB contains the seeds used to generate nonces.
 It might contain the nonces directly.
 NDB is also stored on the IWS.
 \subref{Tbl:mcdb-enstaff} EDB, the encrypted \emph{Staff} table, is stored on the SSS.
 Each encrypted data element $SE(e_f)=Enc_{s_1}(e_f)\oplus n_f$.
 Each record has a tag, enabling users to distinguish dummy and real records.
 In this example, the last record in Table~\subref{Tbl:mcdb-enstaff} is dummy.
 The RSS does not store any data.
\end{table}
\begin{algorithm}[tp]
\scriptsize
\caption{$Setup(k, DB)$}
\label{alg:mcdb-setup}
\begin{algorithmic}[1]
\STATE \underline{Admin: $KeyGen$}
\STATE $s_1, s_2 \lt \{0, 1\}^k$
~\\~
\STATE $GDB \lt \emptyset$, $EDB \lt \emptyset$, $NDB \lt \emptyset$
\STATE \underline{Admin: $GroupGen$} \label{code:setup-grpgen-be}
\FOR{each field $f$}
    \STATE Collect $U_f$ and $\{O(e)\}_{e \in U_f}$, and compute $\Psi_f=\{g \lt GE_{s_1}(e)\}_{e \in U_f}$
    \FOR{each $g \in \Psi_f$}
        \STATE $IL_{f, g} \lt \emptyset$, $\bm{E}_{f, g} \lt \{e\}_{e\in U_f \& GE_{s_1}(e)=g}$  \label{code:setup-grpgen-e}
        \STATE $\tau_{f, g} \lt \max\{|O(e)|\}_{e \in \bm{E}_{f, g}}$ \label{code:setup-grpgen-t}
        \STATE $(\bm{E}_{f, g}, \tau)^* \lt ENC_{s_1}(\bm{E}_{f, g}, \tau)$
            \label{code:setup-gdb}
        \STATE $GDB(f, g) \lt (IL_{f, g}, (\bm{E}_{f, g}, \tau_{f, g})^*)$
    \ENDFOR

\ENDFOR \label{code:setup-grpgen-end}

~\\~
\STATE \underline{Admin: DummyGen}\label{code:setup-dummygen-be}
\FOR{each field $f$}
    \STATE $\Sigma_f \lt \Sigma_{g \in \Psi_f}\Sigma_{e \in \bm{E}_{f, g}} (\tau_{f, g} - O(e))$
\ENDFOR
\STATE $\Sigma_{max} \lt \max\{\Sigma_1, \ldots, \Sigma_F\}$
\STATE Add $\Sigma_{max}$ dummy records with values $(NULL, \ldots, NULL)$ into $DB$
\FOR {each field $f$}
    \FOR {each $e \in U_f$}
        \STATE Assign $e$ to $\tau_{f, GE_{s_1}(e)}-O(e)$ dummy records in field $f$
    \ENDFOR
\ENDFOR
\STATE Mark real and dummy records with $flag=1$ and $flag=0$, respectively
\STATE Shuffle $DB$ \label{code:setup-dummygen-end}

~\\~
\STATE \underline{Admin: DBEnc} \label{code:setup-DBenc-be}
\STATE $id \lt 0$
\FOR{each $rcd \in DB$}
    \STATE $(Ercd, seed, \bm{n}, Grcd)\lt RcdEnc(rcd, flag)$
    \STATE $EDB(id) \lt Ercd$, $NDB(id) \lt (seed, \bm{n})$
    \FOR{each $g_f \in Grcd$}
        \STATE $IL_{f, g} \lt IL_{f, g} \cup id$  \label{code:setup-DBenc-IL}
    \ENDFOR
    \STATE $id++$  \label{code:setup-DBenc-end}
\ENDFOR
\end{algorithmic}
\end{algorithm}

In this section, we give the details for setting up, searching, and updating the database.

\subsection{Setup}
\label{subsec:boot}
The system is set up by the admin by generating the secret keys $s_1$ and $s_2$ based on the security parameter $k$.
$s_1$ is only known to users and is used to protect queries and records from CSPs.
$s_2$ is generated for saving storage, and it is known to both the user and IWS and is used to generate nonces for record and query encryption.
The admin also defines the cryptographic primitives used in \framework.

We assume the initial database $DB$ is not empty.
The admin bootstraps $DB$ with Algorithm \ref{alg:mcdb-setup}, $Setup(k, DB) \rt (EDB, GDB, NDB)$.
Roughly speaking, the admin divides the records into groups (Lines \ref{code:setup-grpgen-be}-\ref{code:setup-grpgen-end}), pads the elements in the same group into the same occurrence by generating dummy records (Lines \ref{code:setup-dummygen-be}-\ref{code:setup-dummygen-end}), and encrypts each record (Lines \ref{code:setup-DBenc-be}-\ref{code:setup-DBenc-end}).
The details of each operation are given below.

\paragraph{Group Generation}
As mentioned, inserting dummy records is necessary to protect the size and search patterns, and grouping the data aims at reducing the number of required dummy records.

Indeed, dividing the data into groups could also reduce the number of records to be searched.
Only padding the data in the same group into the same occurrence could result in the leakage of group information.
Particularly, the SSS can learn if records and queries are in the same group from the size pattern.
Considering the group information will be inevitably leaked after executing queries, \framework allows the SSS to know the group information in advance, and only search a group of records for each query rather than the whole database.
By doing so, the query can be processed more efficiently without leaking additional information.
Yet, in this case, the SSS needs to know which group of records should be searched for each query.
Considering the SSS only gets encrypted records and queries, the group should be determined by the admin and users.
To avoid putting heavy storage overhead on users, \framework divides data into groups with a Pseudo-Random Function (PRF) $GE: \{0, 1\}^* \times \{0, 1\}^k \rightarrow \{0, 1\}^*$.
The elements in field $f$ ($1\leq f \leq F$) with the same $g \leftarrow GE_{s_1}(e)$ value are in the same group, and $(f, g)$ is the group identifier.
In this way, the admin and users can easily know the group identifiers of records and queries just by performing $GE$.

The implementation of $GE$ function affects the security level of the search pattern.
Let $\lambda$ stand for the number of distinct elements contained in a group.
Since the elements in the same group will have the same occurrence, the queries involving those elements (defined as \emph{the queries in the same group}) will match the same number of records.
Then, the adversary cannot tell their search patterns from their size patterns.
Formally, for any two queries matching the same number of records, the probability they involve the same keyword is $\frac{1}{\lambda}$.
Thus, $\lambda$ also represents the security level of the search pattern.
Given $\lambda$, the implementation of $GE$ should ensure each group contains at least $\lambda$ distinct elements.
For instance, the admin could generate the group identifier of $e$ by computing $LSB_b(H_{s_1}(e))$, where $LSB_b$ gets the least significant $b$ bits of its input.
To ensure each group contains at least $\lambda$ distinct elements, $b$ can be smaller.

The details of grouping $DB$ are shown in Lines \ref{code:setup-grpgen-be}-\ref{code:setup-grpgen-end} of Algorithm \ref{alg:mcdb-setup}.
Formally, we define group $(f, g)$ as $(IL_{f, g}, \bm{E}_{f, g}, \tau_{f, g})$, where $IL_{f, g}$ stores the identifiers of the records in this group (Line \ref{code:setup-DBenc-IL}), $\bm{E}_{f, g}$ is the set of distinct elements in this group (Line \ref{code:setup-grpgen-e}), and $\tau_{f, g}=\max\{O(e) | e \in \bm{E}_{f, g}\}$ is the occurrence threshold for padding (Line \ref{code:setup-grpgen-t}).
Since the group information will be stored in the CSP, $(\bm{E}_{f, g}, \tau_{f, g})$ is encrypted into $(\bm{E}_{f, g}, \tau_{f, g})^*$ with $s_1$ and a semantically secure symmetric encryption primitive $ENC: \{0, 1\}^* \times \{0, 1\}^k \rt \{0, 1\}^*$.
$(\bm{E}_{f, g}, \tau_{f, g})^*$ is necessary for insert queries (The details are given in Section~\ref{subsec:update}).

Note that if the initial database is empty, the admin can pre-define a possible $U_f$ for each field and group its elements in the same way.
In this case, $IL=\emptyset$ and $\tau=0$ for each group after the bootstrapping.

\paragraph{Dummy Records Generation}
Once the groups are determined, the next step is to generate dummy records.
The details for generating dummy records are given in Lines \ref{code:setup-dummygen-be}-\ref{code:setup-dummygen-end}, Algorithm \ref{alg:mcdb-setup}.
Specifically, the admin first needs to know how many dummy records are required for the padding.
Since the admin will pad the occurrence of each element in $\bm{E}_{f, g}$ into $\tau_{f, g}$, $\tau_{f, g}-O(e)$ dummy records are required for each $e \in \bm{E}_{f, g}$.

Assume there are $M$ groups in field $f$, then $\Sigma_f=\sum_{i=1}^{M}\sum_{e \in \bm{E}_{f, g^i}}(\tau_{f, g^i}-O(e))$ dummy records are required totally for padding field $f$.
For the database with multiple fields, different fields might require different numbers of dummy records.
Assume $\Sigma_{max}=\max\{\Sigma_1, \ldots, \Sigma_F\}$.
To ensure all fields can be padded properly, $\Sigma_{max}$ dummy records are required.
Whereas, $\Delta_f=\Sigma_{max} - \Sigma_f$ dummy records will be redundant for field $f$.
The admin assigns them a meaningless string, such as `NULL', in field $f$.
After encryption, `NULL' will be indistinguishable from other elements.
Thus, the CSP cannot distinguish between real and dummy records.
Note that users and the admin can search the records with `NULL'.

In this work, we do not consider the query with conjunctive predicates, so we do not consider to pad the element pairs also into the same occurrence.

After padding, each record $rcd$ is appended with a $flag$ to mark if it is real or dummy.
Specifically, $flag=1$ when $rcd$ is real, otherwise $flag=0$.
The admin also shuffles the dummy and real records.

\paragraph{Record Encryption}
\begin{algorithm}[tp]
\scriptsize
\caption{$RcdEnc(rcd, flag)$}
\label{alg:mcdb-enc}
\begin{algorithmic}[1]
   \STATE $seed \stackrel{\$}{\leftarrow} \{0,1\}^{|seed|}$
   \STATE $\bm{n} \leftarrow \Gamma_{s_2} (seed)$, where $\bm{n}=\ldots \| n_f \| \ldots \| n_{F+1}$, $|n_f|=|e|$ and $|n_{F+1}|=|H|+|e|$ \label{code:mcdb-enc-seed}
    \FOR {each element $e_f \in rcd$}
        \STATE $g_f \lt GE_{s_1}(e_f)$  \label{code:mcdb-enc-gid}
        \STATE $e^*_f \lt  Enc_{s_1}(e_f) \oplus n_f$  \label{code:mcdb-enc-se}
    \ENDFOR
    \IF {$flag=1$}
        \STATE $S \stackrel{\$}{\leftarrow} \{0,1\}^{|e|}$
        \STATE $tag \leftarrow (H_{s_1}(S)||S )\oplus n_{F+1}$ \label{code:mcdb-enc-tag-re}
    \ELSE
        \STATE $tag \stackrel{\$}{\leftarrow} \{0,1\}^{|H|+|e|}$ \label{code:mcdb-enc-tag-du}
    \ENDIF
\RETURN $Ercd=(e^*_1, \ldots, e^*_F, tag)$, $(seed, \bm{n})$, and $Grcd=(g_1, \ldots, g_F)$
\end{algorithmic}
\end{algorithm}
The admin encrypts each record before uploading them to the SSS.
The details of record encryption are provided in Algorithm~\ref{alg:mcdb-enc}, $RcdEnc (rcd, flag) \rightarrow (Ercd, seed, \bm{n}, Grcd)$.

To ensure the dummy records could match queries, they are encrypted in the same way as real ones.
Specifically, first the admin generates a random string as a $seed$ for generating a longer nonce $\bm{n}$ with a Pseudo-Random Generator (PRG) $\Gamma: \{0, 1\}^{|seed|} \times \{0, 1\}^{k} \rightarrow \{0, 1\}^{*}$ (Line~\ref{code:mcdb-enc-seed}, Algorithm~\ref{alg:mcdb-enc}).
Second, the admin generates $g_f$ for each $e_f \in rcd$ by computing $GE_{s_1}(e_f)$ (Line \ref{code:mcdb-enc-gid}).
Moreover, $e_f$ is encrypted by computing $SE(e_f): e^*_f \leftarrow Enc_{s_1}(e_f) \oplus n_f$ (Line~\ref{code:mcdb-enc-se}), where $Enc:\{0,1\}^* \times \{0,1\}^k \rightarrow \{0,1\}^* $ is a deterministic symmetric encryption primitive, such as AES-ECB.
Using $n_f$, on the one hand, ensures the semantically secure of $e^*_f$.
On the other hand, it ensures the forward and backward privacy of \framework (as explained in Section \ref{subsec:shuffle}).
$e^*_f$ will be used for encrypted search and data retrieval.

The dummy records are meaningless items, and the user does not need to decrypt returned dummy records.
Thus, we need a way to filter dummy records for users.
Considering the CSPs are untrusted, we cannot mark the real and dummy records in cleartext.
Instead, we use a keyed hash value to achieve that.
Specifically, as shown in Lines~\ref{code:mcdb-enc-tag-re} and \ref{code:mcdb-enc-tag-du}, a tag $tag$ is generated using a keyed hash function $H: \{0,1\}^* \times \{0,1\}^k \rightarrow \{0,1\}^*$ and the secret key $s_1$ if the record is real, otherwise $tag$ is a random bit string.
With the secret key $s_1$, the dummy records can be efficiently filtered out by users before decrypting the search result by checking if:
\begin{align}\label{eq:check1}\notag
  tag_{l} \stackrel{?}{=}& H_{s_1}(tag_{r}), \text{ where } tag_{l}||tag_{r} = tag \oplus n_{F+1}
\end{align}

Once all the real and dummy records are encrypted, the admin uploads the auxiliary information, \ie the set of group information $GDB$ and the set of nonce information $NDB$, to the IWS, and uploads encrypted records $EDB$ to the SSS.
$GDB$ contains $(IL, (\bm{E}, \tau)^*)$ for each group. 
$NDB$ contains a $(seed, \bm{n})$ pair for each record stored in $EDB$. 
To reduce the storage overhead on the IWS, $NDB$ could also just store the seed and recover $\bm{n}$ by computing $\Gamma_{s_2}(seed)$ when required.
Whereas, saving the $(seed, \bm{n})$ pairs reduces the computation overhead on the IWS.
In the rest of this article, we assume NDB contains $(seed, \bm{n})$ pairs.
$EDB$ contains the encrypted record $Ercd$. 
Note that to ensure the correctness of the search functionality, it is necessary to store the encrypted records and their respective $(seed, \bm{n})$ pairs in the same order in $EDB$ and $NDB$ (the search operation is explained in Section \ref{subsec:mcdb-select}).
In Table~\ref{Tbl:mcdb-store}, we take the \emph{Staff} table (\ie Table \ref{Tbl:mcdb-staff}) as an example and show the details stored in $GDB$, $NDB$, and $EDB$ in Tables~\ref{Tbl:mcdb-engid},~\ref{Tbl:mcdb-nonceI}, and~\ref{Tbl:mcdb-enstaff}, respectively.


%


\subsection{Select Query}
\label{subsec:mcdb-select}
\begin{algorithm}[htp]
\scriptsize
\caption{Query$(Q)$}
\label{alg:macdb-search}
\begin{algorithmic}[1]
\STATE \underline{User: QueryEnc$(Q)$}  \label{code:mcdb-query-user-be}
    \STATE $g \leftarrow GE_{s_1}( Q.e )$ \label{code:mcdb-query-user-gid}
    \STATE $EQ.type \lt Q.type$, $EQ.f \lt Q.f$
    \STATE $\eta \stackrel{\$}{\lt} \{0,1\}^{|e|}$
    \STATE $EQ.e^*  \lt Enc_{s_1}(Q.e) \oplus \eta$  \label{code:mcdb-query-se}
    \STATE Send $EQ=(type, f, e^*)$ to the SSS
    \STATE Send $(EQ.f, \eta, g)$ to the IWS \label{code:mcdb-query-user-end}
%
~\\~

\STATE \underline{IWS: $NonceBlind(EQ.f, \eta, g)$} \label{code:mcdb-query-IWS-be}
    \STATE $EN \leftarrow \emptyset$
    \STATE $IL \leftarrow GDB(EQ.f, g)$ \COMMENT{If $(EQ.f, g) \notin GDB$, return $IL$ of the  closest group(s).}\label{code:mcdb-query-IL}
    \FOR {each $id \in IL$ }
        \STATE  $(seed, \bm{n}) \leftarrow NDB(id)$, where $\bm{n}= \ldots ||n_{EQ.f}|| \ldots$ and $|n_{EQ.f}|=|\eta|$
                \label{code:mcdb-query-get-nonce}
        \STATE  $w \lt H'(n_{EQ.f} \oplus \eta)$
        \STATE  $t \lt \eta \oplus seed$
        \STATE  $EN(id) \leftarrow (w, t)$ \label{code:mcdb-query-enc-nonce}
    \ENDFOR
    \STATE Send $IL=(id, \ldots)$ and the encrypted nonce set $EN=((w, t), \ldots)$ to the SSS   \label{code:mcdb-query-IWS-end}

~\\
\STATE \underline{SSS: $Search(EQ, EN, IL)$}   \label{code:mcdb-query-SSS-be}
    \STATE $SR \leftarrow \emptyset$
    \FOR {each $id \in IL$}
        \IF {$H'(EDB(id, EQ.f) \oplus EQ.e^*) = EN(id).w$ }  \label{code:mcdb-query-checke}
             \STATE $SR \leftarrow SR \cup (EDB(id), EN(id).t)$ \label{code:mcdb-query-sr}
        \ENDIF
    \ENDFOR
    \STATE Send the search result $SR$ to the user \label{code:mcdb-query-SSS-end}

~\\
\STATE \underline{User: RcdDec$(SR, \eta)$} \label{code:mcdb-query-dec-be}
\FOR{each $(Ercd, t) \in SR$}
    \STATE  $\bm{n} \leftarrow \Gamma_{s_2} (t \oplus \eta)$ \label{code:mcdb-query-dec-seed}
    \STATE  $(Enc_{s_1}(rcd), tag) \leftarrow Ercd \oplus \bm{n}$
    \STATE $tag_{l} || tag_{r} \lt tag$, where $|tag_r|=|e|$
    \IF{$tag_{l} = H_{s_1}(tag_{r})$} \label{code:mcdb-query-ducheck}
        \STATE  $rcd \leftarrow Enc^{-1}_{s_1} (Enc_{s_1}(rcd))$ \label{code:mcdb-query-dec}
    \ENDIF
\ENDFOR \label{code:mcdb-query-dec-end}
\end{algorithmic}
\end{algorithm}
In this work, we focus on the simple query which only has one single equality predicate.
The complex queries with multiple predicates can be performed by issuing multiple simple queries and combing their results on the user side.
To support range queries, the technique used in \cite{Asghar:Espoon:IRC13} can be adopted. 

For performing a select query, \framework requires the cooperation between the IWS and SSS.
The details of the steps performed by the user, IWS, and SSS are shown in Algorithm~\ref{alg:macdb-search}, $Query(Q) \rightarrow SR$, which consists of 4 components: $QueryEnc$, $NonceBlind$, $Search$, $RcdDec$.

\paragraph{\bm{$QueryEnc(Q)\rightarrow (EQ, \eta, g)$}}
First, the user encrypts the query $Q=(type, f, e)$ using $QueryEnc$ (Lines \ref{code:mcdb-query-user-be} - \ref{code:mcdb-query-user-end}, Algorithm \ref{alg:macdb-search}).
Specifically, to determine the group to be searched, the user first generates $g$ (Line \ref{code:mcdb-query-user-gid}).
We do not aim at protecting the query type and searched field from CSPs.
Thus, the user does not encrypt $Q.type$ and $Q.f$.
The interested keyword $Q.e$ is encrypted into $EQ.e^*$ by computing $Enc_{s_1}(Q.e)\oplus \eta$ (Line \ref{code:mcdb-query-se}).
The nonce $\eta$ ensures that $EQ.e^*$ is semantically secure.
Finally, the user sends $EQ=(type, f, e^*)$ to the SSS and sends $(EQ.f$, $\eta$, $g)$ to the IWS.

\paragraph{$\bm{NonceBlind(EQ.f, \eta, g)\rightarrow (IL, EN)}$}
Second, the IWS provides $IL$ and witnesses $EN$ of group $(EQ.f, g)$ to the SSS by running $NonceBlind$ (Line \ref{code:mcdb-query-IWS-be} - \ref{code:mcdb-query-IWS-end}).
Specifically, for each $id \in IL$, the IWS generates $EN(id)=(w, t)$ (Lines~\ref{code:mcdb-query-get-nonce}-\ref{code:mcdb-query-enc-nonce}),
where $w=H'(n_{EQ.f} \oplus \eta)$ will be used by the SSS to find the matching records, and $t= \eta \oplus seed$ will be used by the user to decrypt the result.
Here $H': \{0, 1\}^* \rt \{0, 1\}^k$ is a hash function.
Note that when $(EQ.f, g)$ is not contained in $GDB$, $IL$ of the \emph{closest} group(s) will be used, \ie the group in field $EQ.f$ whose identifier has the most common bits with $g$\footnote{This can be obtained by comparing the hamming weight of $g' \oplus g$ for all $(EQ.f, g') \in GDB$.}.

\paragraph{$\bm{Search(EQ, IL, EN)\rightarrow SR}$}
Third, the SSS traverses the records indexed by $IL$ and finds the records matching $EQ$ with the assistance of $EN$ (Lines \ref{code:mcdb-query-SSS-be} - \ref{code:mcdb-query-SSS-end}).
Specifically, for each record indexed by $IL$, the SSS checks if $H'(EDB(id, EQ.f) \oplus EQ.e^*) \stackrel{?}{=} EN(id).w$ (Line \ref{code:mcdb-query-checke}).
More specifically, the operation is:
\begin{equation}\notag
 H'(Enc_{s_1}(e_{EQ.f})\oplus n_{EQ.f}  \oplus Enc_{s_1}(Q.e) \oplus \eta ) \stackrel{?}{=} H'(n_{EQ.f} \oplus \eta)
\end{equation}
It is clear that only when $Q.e=e_{EQ.f}$ there is a match.
The SSS sends each matched record $EDB(id)$ and its corresponding $EN(id).t$ to the user as the search result $SR$, \ie $EDB(EQ)$.

\paragraph{\bm{$RcdDec(SR) \rightarrow rcds$}}
To decrypt an encrypted record $Ercd$, both the secret key $s_1$ and nonce $\bm{n}$ are required.
The nonce $\bm{n}$ can be recovered from the returned $t$.
Only the user issuing the query knows $\eta$ and is able to recover $\bm{n}$ by computing $\Gamma_{s_2}(t \oplus \eta)$ (Line \ref{code:mcdb-query-dec-seed}).
With $\bm{n}$, the user can check if each returned record is real or dummy (Line \ref{code:mcdb-query-ducheck}), and decrypt each real record by computing
$Enc^{-1}_{s_1}(Ercd \oplus \bm{n})$ (Line \ref{code:mcdb-query-dec}), where $Enc^{-1}$ is the inverse of $Enc$.

\subsection{Shuffling and Re-randomisation}
\label{subsec:shuffle}
\begin{algorithm}
\scriptsize
\caption{$Shuffle(IL, Ercds)$}
\label{alg:mcdb-shuffle}
\begin{algorithmic}[1]
\STATE \underline{IWS: $PreShuffle(IL)$} \label{code:mcdb-preshuffle-be}
    \STATE $IL' \leftarrow \pi (IL)$
    \STATE Shuffle the $(seed, \bm{n})$ pairs indexed by $IL$ based on $IL'$
    \STATE Update the indices of affected groups in $GDB$ \label{code:mcdb-shuffle-update-groups}
    \FOR {each $id \in IL'$} \label{code:mcdb-shuffle-begin}
        \STATE $seed \stackrel{\$}{\leftarrow} \{0, 1\}^{|seed|}$ \label{code:mcdb-seed'}
        \STATE $\bm{n'} \leftarrow  \Gamma_{s_2}(seed)$ \label{code:mcdb-gn'}
        \STATE $NN(id) \leftarrow NDB(id).\bm{n} \oplus \bm{n'}$ \label{code:mcdb-nn}
        \STATE $NDB(id) \leftarrow (seed, \bm{n'})$ \label{code:mcdb-n'}
    \ENDFOR
    \STATE Send $(IL', NN)$ to the RSS. \label{code:mcdb-preshuffle-end}

~\\~
\STATE{\underline{RSS: $Shuffle(Ercds, IL', NN)$}}
    \STATE Shuffle $Ercds$ based on $IL'$
    \FOR {each $id \in IL'$}
        \STATE $Ercds(id) \leftarrow Ercds(id) \oplus NN(id) $   \label{code:mcdb-shuffle-reenc}
    \ENDFOR
    \STATE Send $Ercds$ to the SSS.
\end{algorithmic}
\end{algorithm}

To protect the access pattern and ensure the forward and backward privacy, \framework shuffles and re-randomises searched records after executing each query, and this procedure is performed by the IWS and RSS.
The details are shown in Algorithm \ref{alg:mcdb-shuffle}, consisting of $PreShuffle$ and $Shuffle$.

\paragraph{$\bm{PreShuffle(IL) \rightarrow (IL', NN)}$}
In \framework, the searched records are re-randomised by renewing the nonces.
Recall that $SE$ encryption is semantically secure due to the nonce.
However, the IWS stores the nonces.
If the IWS has access to encrypted records, it could observe deterministically encrypted records by removing the nonces.
To void leakage, \framework does not allow the IWS to access any records and involves the RSS to shuffle and re-randomise the records.

Yet, the IWS still needs to shuffle $NDB$ and generate new nonces for the re-randomisation by executing $PreShuffle$.
Specifically, as shown in Algorithm \ref{alg:mcdb-shuffle}, Lines \ref{code:mcdb-preshuffle-be}-\ref{code:mcdb-preshuffle-end}, the IWS first shuffles the $id$s in $IL$ with a Pseudo-Random Permutation (PRP) $\pi$ and gets the re-ordered indices list $IL'$.
In our implementation, we leverage the modern version of the Fisher-Yates shuffle algorithm \cite{Knuth73}, where from the first $id$ to the last one, each $id$ in $IL$ is exchanged with a random $id$ storing behind it.
After that, the IWS shuffles $(seed, \bm{n})$ pairs based on $IL'$.
Note that the shuffling operation affects the list of indices of the groups in other fields.
Thus, the IWS also needs to update the index lists of other groups accordingly (Line \ref{code:mcdb-shuffle-update-groups}).
For re-randomising records, the IWS samples a new seed and generates a new nonce $\bm{n'}$.
To ensure the records will be blinded with the respective new nonces stored in $NDB$ after shuffling and re-randomising, the IWS generates $NN=(\bm{n}\oplus\bm{n'}, \ldots)$ for RSS (Line \ref{code:mcdb-nn}).
Afterwards, IWS updates the seed and nonce stored in $NDB(id)$ with the new values.
Finally, $(IL', NN)$ is sent to the RSS.


\paragraph{$\bm{Shuffle(Ercds, IL', NN) \rightarrow Ercds}$}
After searching, the SSS sends the searched records $Ercds$ to the RSS.
Given $IL'$ and $NN$, the RSS starts to shuffle and re-randomise $Ercds$.
Specifically, the RSS first shuffles $Ercds$ based on $IL'$, and then re-randomises each record by computing $Ercds(id) \oplus NN(id)$ (Line \ref{code:mcdb-shuffle-reenc}).
In details, the operation is:
\begin{equation}\notag
 (Enc_{s_1}(rcd_{id})\oplus \bm{n})  \oplus (\bm{n'} \oplus \bm{n}) = Enc_{s_1}(rcd_{id})\oplus \bm{n'}
\end{equation}
That is, $Ercds(id)$ is blinded with the latest nonce stored in $NDB(id)$.
Finally, the re-randomised and shuffled records $Ercds$ are sent back to the SSS.

By using a new set of seeds for the re-randomisation, \framework achieves both forward and backward privacy.
If the SSS tries to execute an old query individually, it will not be able to match any records without the new witness $w$, which can only be generated by the IWS with new nonces.
Similarly, the SSS cannot learn if deleted records match new queries.

\subsection{User Revocation}
\label{sec:revoke}
\framework supports flexible multi-user access in a way that the issued queries and search results of one user are protected from all the other entities. Moreover, revoking users do not require key regeneration and data re-encryption even when one of the CSPs colludes with revoked users.

As mentioned in Section \ref{subsec:mcdb-select}, for filtering dummy records and recovering returned real records, both $s_1$ and the nonce are required.
After shuffling, the nonce is only known to the IWS.
Thus, without the assistance of the IWS and SSS, the user is unable to recover records only with $s_1$.
Therefore, for user revocation, we just need to manage a revoked user list at the IWS as well as at the SSS.
Once a user is revoked, the admin informs the IWS and SSS to add this user into their revoked user lists.
When receiving a query, the IWS and the SSS will first check if the user has been revoked.
If yes, they will reject the query.
In case revoked users collude with either the SSS or IWS, they cannot get the search results, since such operation requires the cooperation of both the user issuing the query, IWS, and SSS.

\subsection{Database Updating}
\label{subsec:update}
\begin{algorithm}
\scriptsize
\caption{$Insert(rcd)$}
\label{alg:mcdb-insert}
\begin{algorithmic}[1]
\STATE \underline{$User(rcd)$}:
\STATE $(Ercd, seed, \bm{n}, Grcd) \lt RcdEnc(rcd, 1)$ \label{code:mcdb-insert-user-enc}
\STATE $INS_{IWS} \lt (seed, \bm{n}, Grcd)$, $INS_{SSS} \lt  Ercd$
\FOR {each $g_f \in Grcd$}   \label{code:mcdb-insert-user-be}
    \STATE $(\bm{E}_{f, g_f}, \tau_{f, g_f})^* \lt GDB(f, g_f)$\COMMENT{If $(f, g_f) \notin GDB$, $g_f \lt g_f'$, where $(f, g_f')$ is the closet group of $(f, g_f)$.}
    \STATE $(\bm{E}_{f, g_f}, \tau_{f, g_f}) \lt ENC^{-1}_{s_1}((\bm{E}_{f, g_f}, \tau_{f, g_f})^*)$ \label{code:mcdb-insert-user-ef}
\ENDFOR
\FOR{each $e_f \in rcd$}  \label{code:mcdb-insert-gendum-be}
    \IF{$e_f \in \bm{E}_{f, g_f}$}
        \STATE $\gamma_{f} \lt |\bm{E}_{f, g_f}| -1 $
    \ELSE
        \STATE $\gamma_{f} \lt \tau_{f, g_f} -1 $
    \ENDIF
\ENDFOR
\STATE $W=\max\{\gamma_{f}\}_{1 \leq f \leq F}$ \label{code:mcdb-insert-encdum-be}
\STATE Generate $W$ dummy records with values $(NULL, \ldots, NULL)$
\FOR{each $e_f \in rcd$}
    \IF{$e_f \in \bm{E}_{f, g_f}$}
        \STATE Assign $\bm{E}_{f, g_f} \setminus e_f$ to $\gamma_f$ dummy records in field $f$
        \STATE $\tau_{f, g_f} ++$
    \ELSE
        \STATE Assign $e_f$ to $\gamma_f$ dummy records in field $f$
        \STATE $\bm{E}_{f, g_f} \lt \bm{E}_{f, g_f} \cup e_f$
    \ENDIF
    \STATE $(\bm{E}_{f, g_f}, \tau_{f, g_f})^* \lt ENC_{s_1}(\bm{E}_{f, g_f}, \tau_{f, g_f})$  \label{code:mcdb-insert-gendum-end}
\ENDFOR
\FOR{each dummy record $rcd'$}
    \STATE $(Ercd, seed, \bm{n}, Grcd) \lt RcdEnc(rcd', 0)$ \label{code:mcdb-insert-enc-du}
    \STATE $INS_{IWS} \lt INS_{IWS} \cup (seed, \bm{n}, Grcd)$
    \STATE $INS_{SSS} \lt INS_{SSS} \cup Ercd$ \label{code:mcdb-insert-user-end}
\ENDFOR
\STATE Send $INS_{IWS}$ and $((\bm{E}_{f, g_f}, \tau_{f, g_f})^*)_{1 \leq f \leq F}$ to the IWS
\STATE Send $INS_{SSS}$ to the SSS

~\\

\STATE \underline{$SSS(INS_{SSS}$}):  \label{code:mcdb-insert-csp-be}
\STATE $IDs \lt \emptyset$
\FOR{each $Ercd \in INS_{SSS}$}
    \STATE $EDB(++id) \lt Ercd$  \label{code:mcdb-insert-csp-insert}
    \STATE $IDs \lt IDs \cup id$
\ENDFOR
\STATE Send $IDs$ to the IWS  \label{code:mcdb-insert-csp-end}

~\\

\STATE \underline{$IWS(INS_{IWS}, (\bm{E}_{f, g_f}, \tau_{f, g_f})^*)_{1 \leq f \leq F}, IDs)$}: \label{code:mcdb-insert-wss-be}
\FOR{each $(seed, \bm{n}, Grcd) \in INS_{IWS}$ and $id \in IDs$} \label{code:mcdb-insert-wss-seed}
    \STATE $NDB(id) \lt (seed, \bm{n})$
    \FOR{$f=1$ to $F$}
        \STATE $GDB (f, g_f) \lt (GDB(f, g_f).IL_{f, g_f} \cup id, (\bm{E}_{f, g_f}, \tau_{f, g_f})^*)$
        \label{code:mcdb-insert-wss-group}
    \ENDFOR
\ENDFOR  \label{code:mcdb-insert-wss-end}
\end{algorithmic}
\end{algorithm}
\framework allows users to update the database after bootstrapping.
However, after updating, the occurrences of involved elements will change.
To effectively protect the search pattern, we should ensure the elements in the same group always have the same occurrence.
\framework achieves that by updating dummy records.

\paragraph{Insert Query}
In \framework, the insert query is also performed with the cooperation of the user, IWS, and SSS.
The idea is that a number of dummy records will be generated and inserted with the real one to ensure all the elements in the same group always have the same occurrence.
The details are shown in Algorithm \ref{alg:mcdb-insert}.

Assume the real record to be inserted is $rcd=(e_1, \ldots, e_F)$.
The user encrypts it with \emph{RcdEnc}, and gets $(Ercd, seed, \bm{n}, Grcd)$ (Line \ref{code:mcdb-insert-user-enc}, Algorithm \ref{alg:mcdb-insert}).
For each $g_f \in Grcd$, the user gets $(\bm{E}_{f, g_f}, \tau_{f, g_f})^*$ of group $(f, g_f)$ and decrypts it.
Note that if $(f, g_f) \notin GDB$, the IWS returns $(\bm{E}_{f, g_f}, \tau_{f, g_f})^*$ of the closest group(s), instead of adding a new group.
That is, $e_f$ will belong to its closet group in this case.
The problem of adding new groups is that when the new groups contain less than $\lambda$ elements, adversaries could easily infer the search and access patterns within these groups.

The next step is to generate dummy records (Lines \ref{code:mcdb-insert-gendum-be}-\ref{code:mcdb-insert-gendum-end}).
The approach of generating dummy records depends on whether $e_f \in \bm{E}_{f, g_f}$, \ie whether $rcd$ introduces new element(s) that not belongs to $U$ or not.
If $e_f \in \bm{E}_{f, g_f}$, after inserting $rcd$, $O(e_f)$ will increase to $\tau_{f, g_f}+1$ automatically.
In this case, the occurrence of other elements in $\bm{E}_{f, g_f}$ should also be increased to $\tau_{f, g_f}+1$.
Otherwise, $O(e_f)$ will be unique in the database, and adversaries can tell if users are querying $e_f$ based on the size pattern.
To achieve that, $\gamma_f=|\bm{E}_{f, g_f}| -1$ dummy records are required for field $f$, and each of them contains an element in $\bm{E}_{f, g_f} \setminus e_f$.
If $e_f \notin \bm{E}_{f, g_f}$, $O(e_f)=0$ in $EDB$.
After inserting, we should ensure $O(e_f)=\tau_{f, g_f}$ since it belongs to the group $(f, g_f)$.
Thus, this case needs $\gamma_f=\tau_{f, g_f} -1$ dummy records for field $f$, and all of them are assigned with $e_f$ in field $f$.

Assume $W$ dummy records are required for inserting $rcd$, where $W=\max\{\gamma_f\}_{1 \leq f \leq F}$.
The user generates $W$ dummy records as mentioned above (`NULL' is used if necessary), and encrypts them with $RcdEnc$ (Lines \ref{code:mcdb-insert-encdum-be}-\ref{code:mcdb-insert-gendum-end}).
Meanwhile, the user adds each new element into the respective $\bm{E}_{f, g_f}$ if there are any, updates $\tau_{f, g_f}$, and re-encrypts $(\bm{E}_{f, g_f}, \tau_{f, g_f})$.
All the encrypted records are sent to the SSS and added into $EDB$ (Lines \ref{code:mcdb-insert-csp-be}-\ref{code:mcdb-insert-csp-end}).
All the $(seed, \bm{n})$ pairs and $(\bm{E}_{f, g_f}, \tau_{f, g_f})^*$ are sent to the IWS and inserted into $NDB$ and $GDB$ accordingly (Lines \ref{code:mcdb-insert-wss-be}-\ref{code:mcdb-insert-wss-end}).
Finally, to protect the access pattern, the shuffling and re-randomising operations over the involved groups will be performed between the IWS and RSS.

\paragraph{Delete Query}
Processing delete queries is straightforward.
Instead of removing records from the database, the user sets them to dummy by replacing their $tag$s  with random strings.
In this way, the occurrences of involved elements are not changed.
Moreover, the correctness of the search result is guaranteed.
However, only updating the tags of matched records leaks the access pattern of delete queries to the RSS.
Particularly, the RSS could keep a view of the latest database and check which records' tags were modified when the searched records are sent back for shuffling.
To avoid such leakage, in \framework, the user modifies the tags of all the searched records for each delete query.
Specifically, the SSS returns the identifiers of matched records and the tags of all searched records to the user.
For matched records, the user changes their tags to random strings directly.
Whereas, for each unmatched records, the user first checks if it is real or dummy, and then generates a proper new tag as done in Algorithm~\ref{alg:mcdb-enc}.
Likewise, the \emph{PreShuffle} and \emph{Shuffle} algorithms are performed between the IWS and RSS after updating all the tags.

However, if the system never removes records, the database will increase rapidly.
To avoid this, the admin periodically removes the records consisting of $F$ `NULL' from the database.
Specifically, the admin periodically checks if each element in each group is contained in one dummy record.
If yes, for each element, the admin updates one dummy record containing it to `NULL'.
As a consequence, the occurrence of all the elements in the same group will decrease, but still is the same.
When the dummy record only consists of `NULL', the admin removes it from the database.

\paragraph{Update query}
In \framework, update queries can be performed by deleting the records with old elements and inserting new records with new elements.

\section{Security Analysis}
\label{sec:security}
In this section, we first analyse the leakage in \framework.
Second, we prove the patterns and forward and backward privacy are protected against the CSPs.

\subsection{Leakage of \framework}
Roughly speaking, given an initial database $DB$ and a sequence of queries $\bm{Q}$, the information leaked to each CSP in \framework can be defined as:
\begin{align*}
\mathcal{L} = \{\mathcal{L}_{\rm Setup}(DB), \{\mathcal{L}_{\rm Query}(Q_i)~or~\mathcal{L}_{\rm Update}(Q_i)\}_{Q_i \in \bm{Q}}\}
\end{align*}
where $\mathcal{L}_{\rm Setup}$, $\mathcal{L}_{\rm Query}$, and $\mathcal{L}_{\rm Update}$ represent the profiles leaked when setting up the system, executing queries and updating the database, respectively.
Specifically, $\mathcal{L}_{\rm Update}$ could be $\mathcal{L}_{\rm Insert}$ or $\mathcal{L}_{\rm Delete}$.
In the following, we analyse the specific information each CSP can learn from the received messages in each phase.

\paragraph{$\mathcal{L}_{\rm Setup}$}
When setting up the system, for the initial database $DB$, as shown in Algorithm \ref{alg:mcdb-setup}, the SSS gets the encrypted database $EDB$, and the IWS gets the group information $GDB$ and nonce information $NDB$.
In this phase, no data is sent to the RSS.
From $EDB$, the SSS learns the number of encrypted records $|EDB|$, the number of fields $F$, the length of each element $|e|$, and the length of tag $|tag|$.
From $NDB$ and $GDB$, the IWS learns $|NDB|$ ($|NDB|=|EDB|$), the length of each seed $|seed|$, the length of each nonce $|\bm{n}|$ ($|\bm{n}|=F|e|+|tag|$), the number of groups $|GDB|$, and the record identifiers $IL$ and $|(\bm{E}, \tau)^*|$ of each group.
In other words, the IWS learns the group information of each record in $EDB$.
Therefore, in this phase, the leakage $\mathcal{L}_{\rm Setup}(DB)$ learned by the SSS, IWS, and RSS can be respectively defined as:
\begin{align*}
    \mathcal{L}^{SSS}_{\rm Setup}(DB) = &\{|EDB|, \mathcal{L}_{rcd}\} \\
    \mathcal{L}^{IWS}_{\rm Setup}(DB) = &\{|NDB|, |\bm{n}|, |seed|, |GDB|,\\ & \{IL_{f, g}, |(\bm{E}_{f, g}, \tau)^*|\}_{(f, g) \in GDB}\} \\
    \mathcal{L}^{RSS}_{\rm Setup}(DB) = &\emptyset
\end{align*}
where $\mathcal{L}_{rcd}=\{F, |e|, |tag|\}$.

\paragraph{$\mathcal{L}_{\rm Query}$}
When processing queries, as mentioned in Algorithm \ref{alg:macdb-search} and \ref{alg:mcdb-shuffle}, the SSS gets the encrypted query $EQ$, $IL$ and encrypted nonces $EN$.
Based on them, the SSS can search over $EDB$ and gets the search result $SR$.
After shuffling, the SSS also gets the shuffled records $Ercds$ from the RSS.
From $\{EQ, IL, EN, SR, Ercds\}$, the SSS learns $\{Q.type, Q.f, |Q.e|, IL, |w|, |t|\}$, where $|t|=|seed|$.
In addition, the SSS can also infer the threshold $\tau$ ($\tau=|SR|$) and the number of distinct elements $|\bm{E}|$ ($|\bm{E}|=\frac{|IL|}{\tau}$) of the searched group.
The IWS only gets $(EQ.f, g, \eta)$ from the user, from which the IWS learns the searched field and group information of each query, and $|\eta|$ ($|\eta|=|Q.e|$).
The RSS gets the searched records $Ercds$, shuffled record identifies $IL'$, and new nonces $NN$ for shuffling and re-randomising.
From them, the RSS only learns $|Ercd|$ ($|Ercd|=|\bm{n}|$), $IL$ and $IL'$.
In summary,
\begin{align*}
	\mathcal{L}^{SSS}_{\rm Query}(Q) = &\{Q.f, Q.type, |Q.e|, Q.\bm{G}, |t|, |w|\} \\
    \mathcal{L}^{IWS}_{\rm Query}(Q) = &\{Q.f, g, |\eta|, |t|, |w|\} \\
    \mathcal{L}^{RSS}_{\rm Query}(Q) = &\{|Ercd|, IL, IL'\}
\end{align*}
where the group information $Q.\textbf{G}=\{g, IL, \tau, |\bm{E}|\}$.

\paragraph{$\mathcal{L}_{\rm Update}$}
Since different types of queries are processed in different manners, the SSS can learn if users are inserting, deleting or updating records, \ie $Q.type$.
As mentioned in Section \ref{subsec:update}, when inserting a real record, the user generates $W$ dummy ones, encrypts both the real and dummy records with $RcdEnc$, and sends them and their group information to the SSS and IWS, respectively.
Consequently, the SSS learns $W$, which represents the threshold or the number of elements of a group, and the IWS also learns the group information of each record.
Moreover, both the SSS and IWS can learn if the insert query introduces new elements that not belong to $U$ based on $|\bm{E}_{f, g}|$ or $|(\bm{E}_{f, g}, \tau_{f, g})^*|$.
The RSS only performs the shuffle operation.
Therefore, $\mathcal{L}_{\rm Insert}(Q)$ learned by each CSP is
\begin{align*}
	\mathcal{L}^{SSS}_{\rm Insert}(Q) = &\{W, \mathcal{L}_{rcd}\} \\
    \mathcal{L}^{IWS}_{\rm Insert}(Q) = &\{Grcd, \{|(\bm{E}_{f, g}, \tau_{f, g})^*|\}_{g_f \in Grcd}, W\} \\
    \mathcal{L}^{RSS}_{\rm Insert}(Q) = &\mathcal{L}^{RSS}_{\rm Query}(Q)
\end{align*}

Delete queries are performed as select queries in \framework, thus $\mathcal{L}_{\rm Delete} = \mathcal{L}_{\rm Query}$ for each CSP.

Above all,
\begin{align*}
	\mathcal{L}^{SSS} = &\{|EDB|, F, |e|, |tag|, |t|, |w|\\ &\{\{Q_i.f, Q_i.type, |Q_i.e|,  Q_i.\bm{G}\} ~or~ W\}_{Q_i \in \textbf{Q}}\} \\
    \mathcal{L}^{IWS} = &\{|NDB|, |GDB|, |\bm{n}|, |seed|, |w|, \\ &\{IL_{f, g}, |(\bm{E}_{f, g}, \tau)^*|\}_{(f, g) \in GDB}, \{\{Q_i.f, Q_i.g, |Q_i.e|\} \\ & ~or~ \{Grcd, \{|(\bm{E_{f, g}}, \tau_{f, g})^*|\}_{g_f \in Grcd}, W\}\}_{Q_i \in \textbf{Q}} \} \\
    \mathcal{L}^{RSS} = &\{|Ercd|, \{IL, IL'\}_{Q_i \in \bm{Q}}\}
\end{align*}


\subsection{Proof of Security}
Given the above leakage definition for each CSP, in this part, we prove adversaries do not learn anything beyond $\mathcal{L}^{csp}$ by compromising the CSP $csp$, where $csp$ could be the SSS, IWS, or RSS.
It is clear that adversaries cannot infer the search, access and size patterns, and forward and backward privacy of queries within a group from $\mathcal{L}^{csp}$.
Therefore, proving \framework only leaks $\mathcal{L}^{csp}$ to $csp$ indicates \framework protects the patterns and ensures forward and backward privacy within groups.

To prove \framework indeed only leaks $\mathcal{L}^{csp}$ to $csp$, we follow the typical method of using a real-world versus ideal-world paradigm \cite{Bost:2017:forward,CashJJJKRS14,KamaraPR12}.
The idea is that first we assume the CSP $csp$ is compromised by a Probabilistic Polynomial-Time (PPT) honest-but-curious adversary $\mathcal{A}$ who follows the protocol honestly as done by $csp$, but wants to learn more information by analysing the received messages and injecting malicious records.
Second, we build two experiments: $\textbf{Real}^{\rm \Pi}_{\mathcal{A}}(k)$ and $\textbf{Ideal}^{\rm \Pi}_{\mathcal{A, S}, \mathcal{L}}(k)$, where $\Pi$ represents \framework.
In $\textbf{Real}^{\rm \Pi}_{\mathcal{A}}(k)$, all the messages sent to $\mathcal{A}$ are generated as specified in \framework.
Whereas, in $\textbf{Ideal}^{\rm \Pi}_{\mathcal{A, S}, \mathcal{L}}(k)$, all the messages are generated by a PPT simulator $\mathcal{S}$ that only has access to $\mathcal{L}^{csp}$.
That is, $\mathcal{S}$ ensures $\mathcal{A}$ only learns the information defined in $\mathcal{L}^{csp}$ from received messages in $\textbf{Ideal}^{\rm \Pi}_{\mathcal{A, S}, \mathcal{L}}(k)$.
In the game, $\mathcal{A}$ chooses an initial database, triggers $Setup$, and adaptively issues \emph{select}, \emph{insert}, and \emph{delete} queries of its choice.
In response, either $\textbf{Real}^{\rm \Pi}_{\mathcal{A}}(k)$ or $\textbf{Ideal}^{\rm \Pi}_{\mathcal{A, S}, \mathcal{L}}(k)$ is invoked to process the database and queries.
Based on the received messages, $\mathcal{A}$ distinguishes if they are generated by  $\textbf{Real}^{\rm \Pi}_{\mathcal{A}}(k)$ or $\textbf{Ideal}^{\rm \Pi}_{\mathcal{A, S}, \mathcal{L}}(k)$.
If $\mathcal{A}$ cannot distinguish that with non-negligible advantage, it indicates $\textbf{Real}^{\rm \Pi}_{\mathcal{A}}(k)$ has the same leakage profile as $\textbf{Ideal}^{\rm \Pi}_{\mathcal{A, S}, \mathcal{L}}(k)$.

%
\begin{mydef}
We say the dynamic SSE scheme is $\mathcal{L}$-adaptively-secure against the CSP $csp$, with respect to the leakage function $\mathcal{L}^{csp}$, if for any PPT adversary issuing a polynomial number of queries, there exists a PPT simulator $\mathcal{S}$, such that $\textbf{Real}^{\rm \Pi}_{\mathcal{A}}(k)$ and $\textbf{Ideal}^{\rm \Pi}_{\mathcal{A, S}, \mathcal{L}}(k)$ are distinguishable with negligible probability $\textbf{negl}({k})$.
\end{mydef}

Herein, we acknowledge that \framework leaks the group information of queries and records and leaks whether the elements involved in select, insert and delete queries belong to $U$ or not.  For clarity, in the proof we assume there is only one group in each field, and omit the group processing. Moreover, we assume all the queries issued by $\mathcal{A}$ only involve elements in $U$.
In this case, the leakage learned by each CSP can be simplified into:
\begin{align*}
	\mathcal{L}^{SSS} = &\{|EDB|, F, |e|, |tag|, |t|, |w|\\ &\{\{Q_i.f, Q_i.type, |Q_i.e|\} ~or~ W\}_{Q_i \in \textbf{Q}}\} \\
    \mathcal{L}^{IWS} = &\{|NDB|, |\bm{n}|, |seed|, |w|, \{|(\bm{E}_{f}, \tau)^*|\}_{f \in [1, F]}, \\ &\{Q_i.f~or~W\}_{Q_i \in \textbf{Q}} \} \\
    \mathcal{L}^{RSS} = &\{|Ercd|, \{IL'\}_{Q_i \in \bm{Q}}\}
\end{align*}


\begin{theorem}\label{the::SSS}
If $\Gamma$ is secure PRF, $\pi$ is a secure PRP, and $H'$ is a random oracle, \framework is a $\mathcal{L}$-adaptively-secure dynamic SSE scheme against the SSS.
\end{theorem}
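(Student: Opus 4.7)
The plan is to construct a PPT simulator $\mathcal{S}$ that, using only $\mathcal{L}^{SSS}$, produces a transcript $(EDB, \{(EQ_i, IL_i, EN_i, \widehat{Ercds}_i)\}_i)$ computationally indistinguishable from the view the SSS obtains in a real execution of $\Pi$. For the \emph{Setup} phase, given $|EDB|$, $F$, $|e|$ and $|tag|$, $\mathcal{S}$ outputs $|EDB|$ entries each consisting of $F$ uniform strings of length $|e|$ together with a uniform tag of length $|tag|$; in the real protocol every field ciphertext is $Enc_{s_1}(e_f)\oplus n_f$ and every tag is XORed with $n_{F+1}$, where all nonce components derive from $\Gamma_{s_2}(seed)$, and the SSS holds neither $s_2$ nor any seed. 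For each \emph{select} query $\mathcal{S}$ samples a fresh $EQ.e^* \in \{0,1\}^{|Q.e|}$, produces $EN(id)=(w,t)$ for every $id \in IL$ with $t$ uniform of length $|seed|$ and $w$ uniform of length $|w|$, picks an arbitrary subset of $IL$ of size $\tau$ to play the role of matching records, and programs $H'$ on the corresponding inputs so that $H'(\widetilde{EDB}(id,EQ.f)\oplus EQ.e^*)=w$. For \emph{insert}, using $W$ and $\mathcal{L}_{rcd}$, $\mathcal{S}$ appends $W+1$ fresh simulated records drawn exactly as in \emph{Setup}; for \emph{delete}, $\mathcal{S}$ reruns the select simulation and overwrites the tags of the ``matched'' records by fresh random strings. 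The shuffled ciphertexts returned to the SSS after each query are re-sampled uniformly, mirroring the re-randomisation by fresh $\Gamma$-derived nonces.

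Indistinguishability is established via a hybrid chain $H_0 \to H_1 \to H_2 \to H_3$ with $H_0 = \textbf{Real}^{\Pi}_{\mathcal{A}}(k)$. In $H_1$ every invocation of $\Gamma_{s_2}(\cdot)$ is replaced by an independent uniform string of the appropriate length; this hop reduces to PRF security of $\Gamma$ since the key is never revealed to $\mathcal{A}$, and each value $t=\eta\oplus seed$ delivered to the SSS is already uniform because $\eta$ is a fresh per-query mask. In $H_2$ the PRP $\pi$ driving \emph{PreShuffle} is replaced by a uniformly sampled permutation, by a direct reduction to PRP security of $\pi$; the relevance for the SSS is that post-shuffle positions would otherwise be correlated with pre-shuffle positions through $\pi$'s pseudorandom structure. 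In $H_3$ every fresh $H'$-evaluation is answered by lazy sampling and $\mathcal{S}$ programs $H'$ at the points $n_{EQ.f}\oplus\eta$ to equal the pre-drawn $w$; because $\eta$ is uniform and unseen by $\mathcal{A}$, the probability that $\mathcal{A}$ has already queried $H'$ on any such preimage is bounded by $O(q\,q_H/2^{|\eta|})$, which is negligible in $k$. Summing the three hops gives $\bigl|\Pr[\textbf{Real}=1] - \Pr[\textbf{Ideal}=1]\bigr| \leq \textbf{negl}(k)$.

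The chief obstacle I anticipate is maintaining global consistency of the programmed oracle across the full adaptive interaction, in particular across interleaved queries and the intervening shuffle/re-randomisation steps. Because every \emph{Shuffle} refreshes the per-record nonces, the simulated ciphertexts $\widetilde{EDB}(id,f)$ shift between queries, so $\mathcal{S}$ must program $H'$ on genuinely fresh points each time and abort on the negligible event that $\mathcal{A}$'s own $H'$-queries collide with previously programmed points. The same mechanism underwrites forward and backward privacy within the same simulation: a query issued before an insert or after a delete addresses nonces that are either not yet sampled or have just been overwritten by new $\Gamma$ outputs, so no consistency constraint links old witnesses $w$ to the affected records and the simulation remains sound. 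Extending the proof to the full leakage profile with multiple groups per field is routine, since groups are processed independently and group identifiers appear explicitly in $\mathcal{L}^{SSS}$.
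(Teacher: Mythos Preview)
Your proposal is correct and follows essentially the same route as the paper's proof: a short hybrid chain that (i) swaps the PRF $\Gamma$ for a random function, (ii) uses PRP security of $\pi$ to justify that the positions of matching records after a shuffle are as good as freshly sampled, and (iii) programs the random oracle $H'$ so that exactly $\tau$ witnesses hit, with all remaining ciphertexts, tags, $t$-values, and post-shuffle records replaced by uniform strings via the one-time-pad argument. The only cosmetic differences are that you isolate the PRP hop as its own hybrid $H_2$ (the paper folds it into the discussion of its Game~$\mathcal{G}_2$) and you make the random-oracle bad-event bound $O(q\,q_H/2^{|\eta|})$ explicit, whereas the paper leaves it informal; neither changes the structure of the argument.
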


\begin{proof}
To argue the security, as done in \cite{Bost:2017:forward,CashJJJKRS14,KamaraPR12}, we prove through a sequence of games.
The proof begins with $\textbf{Real}^{\rm \Pi}_{\mathcal{A}}(k)$, which is exactly the real protocol, and constructs a sequence of games that differ slightly from the previous game and show they are indistinguishable.
Eventually we reach the last game $\textbf{Ideal}^{\rm \Pi}_{\mathcal{A}, \mathcal{S}}(k)$, which is simulated by a simulator $\mathcal{S}$ based on the defined leakage profile $\mathcal{L}^{SSS}$.
By the transitive property of the indistinguishability, we conclude that $\textbf{Real}^{\rm \Pi}_{\mathcal{A}}(k)$ is indistinguishable from $\textbf{Ideal}^{\rm \Pi}_{\mathcal{A}, \mathcal{S}}(k)$ and complete our proof.
Since $RcdDec$ is unrelated to CSPs, it is omitted in the games.
\begin{algorithm}[!htp]
\scriptsize
\caption{$\textbf{Real}^{\rm \Pi}_{\mathcal{A}}(k).RcdEnc(rcd, flag)$ $\|$ \fbox{$\mathcal{G}_1$}, \fbox{$\mathcal{G}_2$}, \fbox{$\mathcal{G}_3$}}
\label{proof::h1::enc}
\begin{algorithmic}[1]
\STATE $seed \stackrel{\$}{\lt} \{0,1\}^{|seed|}$
    \STATE $\bm{n} \lt \Gamma_{s_2} (seed)$ $\vartriangleleft$ \fbox{$\mathcal{G}_1$: $\bm{n} \lt \bm{Nonce}[seed]$}, where $\bm{n}= \ldots \parallel n_f \parallel \ldots \parallel n_{F+1}$, $|n_f|=|e|$ and $|n_{F+1}|=|tag|$
\FOR {each element $e_f \in rcd$}
    \STATE ${e^*_f} \lt Enc_{s_1}(e_f) \oplus n_f$  $\vartriangleleft$  {\fbox{$\mathcal{G}_3$: $e^*_f \lt \{0, 1\}^{|e|}$}}
\ENDFOR
\IF {$flag=1$}
    \STATE $S \stackrel{\$}{\leftarrow} \{0,1\}^{|e|}$
    \STATE $tag \leftarrow (H_{s_1}(S)||S )\oplus n_{F+1}$
    $\vartriangleleft$  {\fbox{$\mathcal{G}_3$: $tag \lt \{0, 1\}^{|tag|}$}}
\ELSE
    \STATE $tag \stackrel{\$}{\leftarrow} \{0,1\}^{|tag|}$
\ENDIF
\RETURN $Ercd=(e^*_1, \ldots, e^*_F, tag)$ and $(seed, \bm{n})$ 
\end{algorithmic}
\end{algorithm}
\begin{algorithm}[!htp]
\scriptsize
\caption{$\textbf{Real}^{\rm \Pi}_{\mathcal{A}}(k).Query(Q)$ $\|$ \fbox{$\mathcal{G}_1$}, \fbox{$\mathcal{G}_2$}, \fbox{$\mathcal{G}_3$}}
\label{h1:macdb-search}
\begin{algorithmic}[1]
	\STATE \underline{User: $QueryEnc(Q)$}
    \STATE $EQ.type \lt Q.type$, $EQ.f \lt Q.f$
    \STATE $\eta \stackrel{\$}{\lt} \{0,1\}^{|e|}$
    \STATE $EQ.e^* \leftarrow Enc_{s_1}(Q.e) \oplus \eta$  $\vartriangleleft$  {\fbox{$\mathcal{G}_3$: $EQ.e^* \lt \{0, 1\}^{|e|}$}}
    \STATE Send $EQ=(type, f, op, e^*)$ to the SSS
    \STATE Send $(EQ.f, \eta)$ to the IWS

~\\

\STATE \underline{IWS: $NonceBlind(EQ.f, \eta)$}
    \STATE $EN \leftarrow \emptyset$
    \STATE \fbox{$\mathcal{G}_2$: Randomly put $\tau_f$ record identifiers into $\bm{I}$}
    \FOR {each $id \in NDB$ }
        \STATE $(seed, \bm{n}) \leftarrow NDB(id)$, where $\bm{n}= \ldots ||n_{EQ.f}|| \ldots $ and $|n_{EQ.f}|=|\eta|$    $\vartriangleleft$   \fbox{Deleted in $\mathcal{G}_2$}
        \STATE  ${w} \lt H'(n_{EQ.f} \oplus \eta)$   $\vartriangleleft$    \fbox{$\mathcal{G}_2$:~~\begin{minipage}[c][1.0cm][t]{4.5cm}{\textbf{if} $id \in \bm{I}$ \\ $w_{id} \lt H'(EDB(id, EQ.f) \oplus EQ.e^*)$ \\ \textbf{else} \\ $w_{id} \lt \{0, 1\}^{|w|}$ } \end{minipage}}
        \STATE  $t \lt \eta \oplus seed$   $\vartriangleleft$  {\fbox{$\mathcal{G}_3$: $t \lt \{0, 1\}^{|seed|}$}}
        \STATE  $EN(id) \leftarrow (w, t)$
    \ENDFOR
    \STATE Send the encrypted nonce set $EN=((w, t), \ldots)$ to the SSS

~\\
\STATE \underline{SSS: $Search(EQ, EN)$}
    \STATE $SR \leftarrow \emptyset$
    \FOR {each $id \in EDB$}
        \IF {$H'(EDB(id, EQ.f) \oplus EQ.e^*) = EN(id).w$}
             \STATE $SR \leftarrow SR \cup (EDB(id), EN(id).t)$
        \ENDIF
    \ENDFOR
    \STATE Send the search result $SR$ to the user
		
	\end{algorithmic}
\end{algorithm}
\begin{algorithm}[!ht]
\scriptsize
\caption{$\textbf{Real}^{\rm \Pi}_{\mathcal{A}}(k).Shuffle()$ $\|$ \fbox{$\mathcal{G}_1$}, \fbox{$\mathcal{G}_2$}, \fbox{$\mathcal{G}_3$}}
\label{h1:mcdb-shuffle}
\begin{algorithmic}[1]
\STATE \underline{IWS: $PreShuffle()$}
	\STATE {$IL' \leftarrow \pi(NDB)$}
	\FOR {each $id \in IL'$}
		\STATE $seed \stackrel{\$}{\leftarrow} \{0, 1\}^{|seed|}$
        \STATE $\bm{n}' \lt \Gamma_{s_2}(seed)$  $\vartriangleleft$ \fbox{$\mathcal{G}_1$: $\bm{n}' \lt \bm{Nonce}[seed]$} 
		\STATE $NN(id) \leftarrow NDB(id).\bm{n} \oplus \bm{n}' $   $\vartriangleleft$ \fbox{$\mathcal{G}_3$: $NN(id) \lt \{0, 1\}^{|\bm{n}|}$}
		\STATE $NDB(id)\leftarrow (seed, \bm{n}')$
	\ENDFOR
	\STATE Send $(IL', NN)$ to the RSS.

~\\~
\STATE{\underline{RSS: $Shuffle(Ercds, IL', NN)$}}
	\STATE Shuffle $Ercds$ based on $IL'$
	\FOR {each $id \in IL'$}
	\STATE {$Ercds(id) \leftarrow Ercds(id) \oplus NN(id)$}   $\vartriangleleft$ \fbox{$\mathcal{G}_3$: $Ercds(id) \lt \{0, 1\}^{|\bm{n}|}$}
	\ENDFOR
	\STATE Send $Ercds$ to the SSS.
\end{algorithmic}
\end{algorithm}
%
\noindent\textbf{Game $\mathcal{G}_1$}: Comparing with $\textbf{Real}^{\rm \Pi}_{\mathcal{A}}(k)$, the difference in $\mathcal{G}_1$ is that the PRF $\Gamma$ for generating nonces, in $RcdEnc$ and $PreShuffle$ algorithms, is replaced with a mapping \textbf{Nonce}.
Specifically, as shown in Algorithm \ref{proof::h1::enc} and \ref{h1:mcdb-shuffle}, for each unused $seed$ (the length of seed is big enough), a random string of length $F|e|+|tag|$ is generated as the nonce, stored in \textbf{Nonce}, and then reused thereafter.
This means that all of the $\bm{n}$ are uniform and independent strings.
In this case, the adversarial distinguishing advantage between $\textbf{Real}^{\rm \Pi}_{\mathcal{A}}(k)$ and $\mathcal{G}_1$ is exactly the distinguishing advantage between a truly random function and PRF.
Thus, this change made negligible difference between between $\textbf{Real}^{\rm \Pi}_{\mathcal{A}}(k)$ and $\mathcal{G}_1$, \ie
\[
\centerline { $|{\rm Pr}[\textbf{Real}^{\rm \Pi}_{\mathcal{A}}(k)=1] - {\rm Pr}[\mathcal{G}_1=1]| \leq \textbf{negl}({k})$}
\]
where ${\rm Pr}[\mathcal{G}=1]$ represents the probability of that the messages received by $\mathcal{A}$ are generated by $\mathcal{G}$.

\noindent\textbf{Game $\mathcal{G}_2$}:
From $\mathcal{G}_1$ to $\mathcal{G}_2$, $w$ is replaced with a random string, rather than generated via $H'$.
However, it is necessary to ensure $\mathcal{A}$ gets $\tau_f$ matched records after searching over $EDB$, since that is the leakage $\mathcal{A}$ learns, where $\tau_f$ is the threshold of the searched field.
To achieve that, the experiment randomly picks $\tau_f$ witnesses and programs their values.
Specifically, as shown in Algorithm \ref{h1:macdb-search}, the experiment first randomly picks a set of record identifiers $\bm{I}$, where $|\bm{I}|=\tau_f$.
Second, for each identifier $id \in \bm{I}$, the experiment programs $w_{id} \lt H'(EDB(id, EQ.f) \oplus EQ.e^*)$.
By doing so, the records identified by $\bm{I}$ will match the query.
For the identifier $id \notin \bm{I}$, $w_{id} \lt \{0, 1\}^{|w|}$.

The only difference between $\mathcal{G}_2$ and $\mathcal{G}_1$ is the generation of $w$.
In the following, we see if $\mathcal{A}$ can distinguish the two games based on $w$.
In $\mathcal{G}_2$,
\begin{align}\notag
  For~id \in \bm{I}, ~w_{id} &\lt H'(EDB(id, EQ.f) \oplus EQ.e^*)  \\ \notag
  For~id \notin \bm{I}, ~w_{id} &\lt \{0, 1\}^{|w|}
\end{align}
Recall that in $\mathcal{G}_1$.
\begin{equation}\notag
  w_{id} \lt H'(n_{EQ.f} \oplus \eta)
\end{equation}
In $\mathcal{G}_1$, $n_{EQ.f}$ and $\eta$ are random strings.
In $\mathcal{G}_2$, due to the one-time pad encryption in $RcdEnc$ and $QueryEnc$, $EDB(id, EQ.f)$ and $EQ.e^*$ are indistinguishable from random strings.
Thus, we can say for $id \in \bm{I}$ $w_{id}$ is generated in the same way as done in $\mathcal{G}_1$. 
For $id \notin \bm{I}$, $w_{id}$ is a random string in $\mathcal{G}_2$, whereas in $\mathcal{G}_1$  $w_{id}$ is generated by deterministic $H'$. 
It seems $\mathcal{A}$ could easily distinguish $\mathcal{G}_2$ and $\mathcal{G}_1$, since $\mathcal{G}_1$ outputs the same $w$ for the same input, whereas $\mathcal{G}_2$ does not.
Indeed, in $\mathcal{G}_1$ the inputs to $H'$, $n_{EQ.f}$ and $\eta$, are random strings, thus the probability of getting the same input for $H'$ is negligible, making $H'$ indistinguishable from a uniform sampling. 
Thus, in both cases $w_{id}$ in $\mathcal{G}_2$ is indistinguishable from $w_{id}$ in $\mathcal{G}_1$. 

Next, we discuss if $\mathcal{A}$ can distinguish the two games based on $SR$.
The leakage of $SR$ includes the identifier of each matched records and $|SR|$.
Due to the padding, $|SR|=|\bm{I}|=\tau_f$, which means the two games are indistinguishable based on $|SR|$.
In $\mathcal{G}_1$, the identifiers of matched records are determined by the shuffle operations performed for the previous query.
In $\mathcal{G}_2$, the identifiers of matched records are randomly picked.
Thus, the distinguishing advantage between $\mathcal{G}_1$ and $\mathcal{G}_2$ based on the identifiers is exactly the distinguishing advantage between a truly random permutation and PRP, which is negligible.

Above all, we have
\[
\centerline { $|{\rm Pr}[\mathcal{G}_2=1] - {\rm Pr}[\mathcal{G}_1=1]| \leq \textbf{negl}(k)$}
\]

\noindent\textbf{Game $\mathcal{G}_3$}:
The difference between $\mathcal{G}_2$ and $\mathcal{G}_3$ is that all the XORing operations, such as the generation of $e^*$, $Q.e^*$, and $t$, are replaced with randomly sampled strings (The details are shown in Algorithms \ref{proof::h1::enc}, \ref{h1:macdb-search}, and \ref{h1:mcdb-shuffle}).
Since sampling a fixed-length random string is indistinguishable from the one-time pad encryption,
we have
\[
\centerline { ${\rm Pr}[\mathcal{G}_3=1] = {\rm Pr}[\mathcal{G}_2=1] $}
\]

%
\begin{algorithm}[!htp]
\scriptsize
\caption{$\mathcal{S}.RcdEnc(\mathcal{L}_{rcd}$)}
\label{proof::ideal::enc}
\begin{algorithmic}[1]
	\STATE \tgrey{$seed \stackrel{\$}{\lt} \{0,1\}^{|seed|}$}
	\STATE \tgrey{$\bm{n} \lt \bm{Nonce}[seed]$}
    \FOR {each $f \in [1, F]$}
        \STATE ${e^*_f} \lt \{0, 1\}^{|e|}$
    \ENDFOR
	\STATE 	{$tag \stackrel{\$}{\leftarrow} \{0,1\}^{|tag|}$}
	\RETURN {$Ercd=(e^*_1, \ldots, e^*_F, tag)$} and \tgrey{$(seed, \bm{n})$} 
\end{algorithmic}
\end{algorithm}

%
\begin{algorithm}[!htp]
\scriptsize
\caption{$\mathcal{S}.Query(\mathcal{L}^{SSS}_{Query})$}
\label{ideal:macdb-search}
\begin{algorithmic}[1]
	\STATE \underline{User: $QueryEnc(\mathcal{L}^{SSS}_{Query})$}
    \STATE $EQ.type \lt Q.type$, $EQ.f \lt Q.f$
    \STATE \tgrey{$\eta \stackrel{\$}{\lt} \{0,1\}^{| e |}$}
    \STATE $EQ.e^* \leftarrow \{0, 1\}^{|e|}$
    \STATE Send $EQ=(type, f, op, e^*)$ to the SSS
    \STATE \tgrey{Send $(EQ.f, \eta)$ to the IWS}

~\\

\STATE \underline{IWS: $NonceBlind(\mathcal{L}^{SSS}_{Query})$}
    \STATE $EN \leftarrow \emptyset$
    \STATE Randomly put $\tau_f$ records identifers into $\bm{I}$
    \FOR {each $id \in [1, |EDB|]$ }
        \IF{$id \in \bm{I}$}
            \STATE  ${w} \lt H'(EDB(id, EQ.f) \oplus EQ.e^*)$
        \ELSE
            \STATE  ${w} \lt \{0, 1\}^{|w|}$
        \ENDIF
        \STATE  $t \lt \{0, 1\}^{|seed|}$
        \STATE  $EN(id) \leftarrow (w, t)$
    \ENDFOR
    \STATE Send the encrypted nonce set $EN=((w, t), \ldots)$ to the SSS

~\\
\STATE \underline{SSS: $Search(EQ, EN)$}
    \STATE $SR \leftarrow \emptyset$
    \FOR {each $id \in EDB$}
        \IF {$H'(EDB(id, EQ.f) \oplus EQ.e^*) = EN(id).w$ }
             \STATE $SR \leftarrow SR \cup (EDB(id), EN(id).t)$
        \ENDIF
    \ENDFOR
    \STATE Send the search result $SR$ to the user
		
	\end{algorithmic}
\end{algorithm}

\begin{algorithm}
\scriptsize
\caption{$\mathcal{S}.Shuffle(\mathcal{L}^{SSS}_{Query})$}
\label{proof::ideal::shuffle}
\begin{algorithmic}[1]
\STATE \underline{IWS: $PreShuffle()$}
    \STATE \tgrey{$IL' \leftarrow RP (NDB)$}
    \FOR {\tgrey{each $id \in IL'$}}
        \STATE \tgrey{$seed \stackrel{\$}{\leftarrow} \{0, 1\}^{|seed|}$}
        \STATE \tgrey{$\bm{n'} \leftarrow  \bm{Nonce}[seed]$}
        \STATE \tgrey{$NN(id) \leftarrow \{0, 1\}^{|\bm{n}|}$}
        \STATE \tgrey{$NDB(id)\leftarrow (seed, \bm{n'})$}
    \ENDFOR
    \STATE \tgrey{Send $(IL', NN)$ to the RSS.}

~\\

\STATE{\underline{RSS: $Shuffle(\mathcal{L}^{SSS}_{Query})$}}
    \STATE \tgrey{Shuffle $Ercds$ based on $IL'$}
    \FOR {each $id \in IL$}
        \STATE $Ercds(id) \leftarrow \{0, 1\}^{|\bm{n}|} $
    \ENDFOR
    \STATE Send $Ercds$ to the SSS.
\end{algorithmic}
\end{algorithm}
%

\noindent\textbf{$\textbf{Ideal}^{\rm \Pi}_{\mathcal{A}, \mathcal{S}, \mathcal{L}}(k)$}:
From $\mathcal{G}_3$ to the final game, we just replace the inputs to $RcdEnc$, $Query$ and $Shufle$ algorithms with $\mathcal{L}^{SSS}$.
Moreover, for clarity, we fade the operations unrelated to the SSS.
From Algorithms \ref{proof::ideal::enc}, \ref{ideal:macdb-search}, and \ref{proof::ideal::shuffle}, it is easy to observe that the messages sent to $\mathcal{A}$, \ie $\{Ercd, EQ, EN, Ercds\}$, can be easily simulated by only relying on $\mathcal{L}^{SSS}$.
Here we have:

\[
{\rm Pr}[\textbf{Ideal}^{\rm \Pi}_{\mathcal{A},\mathcal{S}, \mathcal{L}}(k)=1]={\rm Pr}[\mathcal{G}_3=1]
\]

By combining all the distinguishable advantages above, we get:
\[
|{\rm Pr}[\textbf{Ideal}^{\rm \Pi}_{\mathcal{A},\mathcal{S}, \mathcal{L}}(k)=1]- {\rm Pr}[\textbf{Real}^{\rm \Pi}_{\mathcal{A}}(k)=1]| < \textbf{negl}(k)
\]

In $\textbf{Ideal}^{\rm \Pi}_{\mathcal{A},\mathcal{S}, \mathcal{L}}(k)$, $\mathcal{A}$ only learns $\mathcal{L}_{rcd}$ and $\mathcal{L}^{SSS}_{Query}$.
The negligible advantage of a distinguisher between $\textbf{Real}^{\rm \Pi}_{\mathcal{A}}(k)$ and $\textbf{Ideal}^{\rm \Pi}_{\mathcal{A},\mathcal{S}, \mathcal{L}}(k)$ indicates that \framework also only leaks $\mathcal{L}_{rcd}$ and $\mathcal{L}^{SSS}_{Query}$.

Although the above simulation does not include the $Setup$ and updating phases, it is clear that the two phases mainly rely on $RcdEnc$ algorithm, which has been proved only leaks $\mathcal{L}_{rcd}$ to the SSS.
From $Setup$ phase, $\mathcal{A}$ only gets $EDB$, and each record in $EDB$ is encrypted with $RcdEnc$.
Thus, $\mathcal{A}$ only learns $|EDB|$ and $\mathcal{L}_{rcd}$ in this phase.
Similarly, $\mathcal{A}$ only gets $W+1$ encrypted records in $Insert$ algorithm.
Therefore, in addition to $\mathcal{L}_{rcd}$, it only learns $W$, which is equal to $|\bm{E}|$ or $\tau$ of a group.
For delete queries, $\mathcal{A}$ learns $\mathcal{L}^{SSS}_{Query}$ since they are first performed as select queries.
As proved above, the tags are indistinguishable from random strings, meaning the returned tags do not leak additional information.
\end{proof}

\begin{theorem}\label{the::IWS}
If $ENC$ is semantically secure, \framework is a $\mathcal{L}$-adaptively-secure dynamic SSE scheme against the IWS.
\end{theorem}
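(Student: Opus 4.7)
The plan is to mirror the game-hopping strategy used in the proof of Theorem~\ref{the::SSS}, but this time focused on the view of the IWS. Recall that the messages delivered to the IWS across the three phases are: during $Setup$, the tables $GDB$ (containing each $IL_{f,g}$ together with the ciphertext $(\bm{E}_{f,g},\tau_{f,g})^*$) and $NDB$ (containing $(seed,\bm{n})$ pairs); during $Query$, the tuple $(EQ.f,\eta,g)$ from the user; and during $Insert$, a set of $(seed,\bm{n},Grcd)$ entries together with the refreshed encrypted group descriptors $(\bm{E}_{f,g_f},\tau_{f,g_f})^*$. Crucially, the IWS never sees any encrypted record, any encrypted query ciphertext $EQ.e^*$, any witness $w$ generated by itself in the clear form before hashing, or any shuffled material from the RSS; all the sensitive payloads flow to the SSS or the RSS.

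First I would define $\mathcal{G}_1$ to be identical to $\textbf{Real}^{\rm \Pi}_{\mathcal{A}}(k)$ except that every invocation of $ENC_{s_1}(\bm{E}_{f,g},\tau_{f,g})$, both during $Setup$ and during the $Insert$ algorithm where the group descriptor is refreshed, is replaced by a freshly sampled uniform string of length $|(\bm{E}_{f,g},\tau_{f,g})^*|$. A standard hybrid reduction over the (polynomially many) encryptions shows that
\[
|\Pr[\mathcal{G}_1=1]-\Pr[\textbf{Real}^{\rm \Pi}_{\mathcal{A}}(k)=1]|\le \textbf{negl}(k)
\]
by the semantic security of $ENC$: any distinguisher translates into an adversary against $ENC$ whose advantage is at least a polynomial fraction of the game gap. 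This is the only place where a cryptographic assumption is actually consumed in the IWS view.

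Next I would argue that $\mathcal{G}_1$ is already perfectly simulatable from $\mathcal{L}^{IWS}$ alone, so that $\textbf{Ideal}^{\rm \Pi}_{\mathcal{A},\mathcal{S},\mathcal{L}}(k)$ can be defined by running the same code with the following substitutions performed by $\mathcal{S}$: every $seed$ is a uniform string of length $|seed|$ (direct from the real protocol); every $\bm{n}$ is a uniform string of length $|\bm{n}|$ (since after $\mathcal{G}_1$ these are only consumed in contexts invisible to the IWS, their joint distribution is indistinguishable from uniform, matching the behaviour already introduced via $\bm{Nonce}$ in Theorem~\ref{the::SSS}); every $\eta$ accompanying a query is uniform of length $|Q.e|$; every $IL$ is determined by the public bookkeeping of which record identifiers have been inserted so far, which the simulator maintains using only $|EDB|$, $W$ for each insert, and the field index in each query, all contained in $\mathcal{L}^{IWS}$; and every $(\bm{E}_{f,g_f},\tau_{f,g_f})^*$ is a uniform string of the length announced in $\mathcal{L}^{IWS}$. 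Because each of these quantities is already a uniform independent string in $\mathcal{G}_1$, we have $\Pr[\textbf{Ideal}^{\rm \Pi}_{\mathcal{A},\mathcal{S},\mathcal{L}}(k)=1]=\Pr[\mathcal{G}_1=1]$, and chaining the two bounds finishes the argument.

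The bookkeeping hop, rather than the cryptographic hop, is where I expect the real care to be required: I must check that the simulator can consistently maintain $IL$, the record-identifier assignment produced by the shuffle permutation, and the mapping between insert calls and the identifiers later referenced in queries, using only the field indices, query types, and padding counts $W$ in $\mathcal{L}^{IWS}$, together with the public list sizes already leaked. Since insertions append identifiers to $NDB$ and shuffles are internal to the IWS (so the permutation itself is not additional information to $\mathcal{A}$ who controls the IWS), this bookkeeping is purely combinatorial and can be performed deterministically from the leakage. Because we are in the simplified setting of one group per field and all queried elements in $U$, no extra leakage about group membership or new-element introduction needs to be accounted for beyond what is already in $\mathcal{L}^{IWS}$.
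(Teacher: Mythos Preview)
Your overall structure matches the paper's proof almost exactly: a single cryptographic hop that replaces each $ENC_{s_1}(\bm{E}_{f,g},\tau_{f,g})$ by a uniform string of the correct length, followed by the observation that everything else the IWS sees is already simulatable from $\mathcal{L}^{IWS}$. The paper's argument is precisely this one-step reduction.

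There is, however, one concrete slip in your bookkeeping hop. You propose that the simulator samples each $\bm{n}$ as an independent uniform string, justifying this by saying the nonces are ``only consumed in contexts invisible to the IWS''. That is not true in this protocol: the IWS holds the key $s_2$ and stores the pairs $(seed,\bm{n})$ in $NDB$, so the adversary controlling the IWS can recompute $\Gamma_{s_2}(seed)$ and test whether it equals the delivered $\bm{n}$. If the simulator hands out an independent uniform $\bm{n}$, the adversary distinguishes immediately, and no PRF assumption on $\Gamma$ saves you because the adversary has the key. The paper sidesteps this by noting that the IWS can generate $\bm{n}$ itself from $seed$; the simulator therefore samples $s_2$, gives it to the adversary as part of the IWS's state, samples each $seed$ uniformly (which is exactly the real distribution), and computes $\bm{n}\leftarrow\Gamma_{s_2}(seed)$ honestly. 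With that correction your argument goes through and is essentially identical to the paper's.
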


\begin{proof}
Herein, we also assume all the records are in one group.
In this case, the IWS gets $(seed, \bm{n})$ for each record and $(\bm{E}_f, \tau_f)^*$ for each field when setting up the database, gets $(Q.f, \eta)$ when executing queries, and gets updated $(\bm{E}_f, \tau_f)^*$ when inserting records.
Note that the IWS can generate $\bm{n}$ by itself since it has the key $s_2$.
Considering $seed$ and $\eta$ are random strings, from $\textbf{Real}^{\rm \Pi}_{\mathcal{A}}(k)$ to $\textbf{Ideal}^{\rm \Pi}_{\mathcal{A}, \mathcal{S}, \mathcal{L}}(k)$ we just need one step.
Specifically, given $\mathcal{L}^{IWS}$, in $\textbf{Ideal}^{\rm \Pi}_{\mathcal{A}, \mathcal{S}, \mathcal{L}}(k)$, $\mathcal{S}$ just needs to simulate $(\bm{E}_f, \tau_f)^*$ with $|(\bm{E}_f, \tau_f)^*|$-bit random strings in $Setup$ and $Insert$ algorithms.
Given $ENC$ is semantically secure, we have
\[
|{\rm Pr}[\textbf{Ideal}^{\rm \Pi}_{\mathcal{A},\mathcal{S}, \mathcal{L}}(k)=1]- {\rm Pr}[\textbf{Real}^{\rm \Pi}_{\mathcal{A}}(k)=1]| < \textbf{negl}(k)
\]
\end{proof}

\begin{theorem}\label{the::RSS}
\framework is a $\mathcal{L}$-adaptively-secure dynamic SSE scheme against the RSS.
\end{theorem}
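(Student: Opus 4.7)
The plan is to follow the same game-hopping methodology used for Theorems \ref{the::SSS} and \ref{the::IWS}, exploiting the fact that the RSS's entire view in any invocation of $Shuffle$ consists of (i) the blinded records $Ercds$ delivered by the SSS, (ii) the shuffled index list $IL'$ produced by the IWS, and (iii) the re-randomisation masks $NN$ also produced by the IWS. I would construct a simulator $\mathcal{S}$ that, given only $\mathcal{L}^{RSS}=\{|Ercd|,\{IL,IL'\}_{Q_i\in\bm{Q}}\}$, outputs a transcript statistically close to the real view.

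First, I would hop from $\textbf{Real}^{\Pi}_{\mathcal{A}}(k)$ to a game $\mathcal{G}_1$ in which every invocation of the PRG $\Gamma_{s_2}$ is replaced by the table-based mapping $\bm{Nonce}$ that assigns a fresh uniform string of length $F|e|+|tag|$ to each distinct seed, exactly as in the proof of Theorem \ref{the::SSS}. Since the RSS does not possess $s_2$, the distinguishing advantage is bounded by the PRF/PRG advantage of $\Gamma$ and is therefore negligible. After this hop, every per-record nonce $\bm{n}$ and every fresh nonce $\bm{n}'$ generated inside $PreShuffle$ is a truly uniform string, independently sampled from all others the RSS has ever seen.

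Next, I would hop to $\mathcal{G}_2$ in which (a) each shuffled record $Ercds(id)$ handed to the RSS is replaced by a uniform string of length $|\bm{n}|=|Ercd|$, and (b) each mask $NN(id)=\bm{n}\oplus\bm{n}'$ is replaced by an independent uniform string of the same length. The correctness of these replacements follows from the one-time-pad argument already exploited in $\mathcal{G}_3$ of Theorem \ref{the::SSS}: for every fixed $id$, the value $Ercds(id)=Enc_{s_1}(rcd)\oplus\bm{n}$ is the XOR of a deterministic quantity with a uniform pad that the RSS has never otherwise observed, hence uniform from its viewpoint; and $NN(id)$ is the XOR of two independent uniform strings, hence itself uniform and independent of $Ercds(id)$. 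The key observation justifying independence across successive queries is that each time a record is touched by $Shuffle$ the IWS samples a brand-new $seed$ and therefore installs a brand-new pad before releasing the next blinded copy, so the pads used in consecutive views of the ``same'' record are fresh and independent.

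Finally, I would define $\textbf{Ideal}^{\Pi}_{\mathcal{A},\mathcal{S},\mathcal{L}}(k)$: given $\mathcal{L}^{RSS}$, the simulator picks a uniform $|Ercd|$-bit string for every $id\in IL$, a uniform $|\bm{n}|$-bit string for every entry of $NN$, and forwards the honestly supplied permutation $IL'$. This transcript is identically distributed to the one in $\mathcal{G}_2$, so by the transitivity of indistinguishability the real and ideal experiments are distinguishable only with probability $\textbf{negl}(k)$. I expect the only subtlety to require care to be the argument that pads stay fresh across repeated shuffles of the same identifier; the argument is routine because $PreShuffle$ samples a new seed for every affected $id$ on every call, but it must be spelt out to rule out any correlation the RSS might accumulate over many queries. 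No cryptographic assumption beyond the PRG security already used in Theorem \ref{the::SSS} is needed, which is consistent with the theorem being stated unconditionally.
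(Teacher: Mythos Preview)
Your proposal is correct and follows essentially the same route as the paper: replace the encrypted records and the masks $NN$ seen by the RSS with uniform strings of the appropriate length via a one-time-pad argument, and let the simulator hand over $IL'$ together with these random strings. The paper compresses this into a single hop (it refers back to Game~$\mathcal{G}_3$ of Theorem~\ref{the::SSS}) and even writes exact equality between the real and ideal experiments, whereas you insert an explicit preliminary hop replacing $\Gamma_{s_2}$ by a random table and you spell out why pads stay fresh across repeated shuffles; your version is the more careful of the two, but the underlying idea is identical.
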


\begin{proof}
In \framework, the RSS is only responsible for shuffling and re-randomising records after each query.
For the shuffling and re-randomising, it gets encrypted records, $IL'$ and $NN$.
Here we also just need one step to reach $\textbf{Ideal}^{\rm \Pi}_{\mathcal{A}, \mathcal{S}, \mathcal{L}}(k)$.
Given $\mathcal{L}^{RSS}$, as done in the above \textbf{Game $\mathcal{G}_3$}, $\mathcal{S}$ needs to replace $e^*$ and $tag$ in $RcdEnc$ with $|e|$-bit and $|tag|$-bit random strings respectively and simulate each element in $NN$ with a $|Ercd|$-bit random string in $PreShuffle$.
As mentioned, since sampling a fixed-length random string is indistinguishable from the one-time pad encryption, we have
\[
|{\rm Pr}[\textbf{Ideal}^{\rm \Pi}_{\mathcal{A},\mathcal{S}, \mathcal{L}}(k)=1] = {\rm Pr}[\textbf{Real}^{\rm \Pi}_{\mathcal{A}}(k)=1]|
\]
\end{proof}

\section{Performance Analysis}
\label{sec:MCDB-perf}
We implemented \framework in C using MIRACL 7.0 library for cryptographic primitives.
The performance of all the entities was evaluated on a desktop machine with Intel i5-4590 3.3 GHz 4-core processor and 8GB of RAM.
We evaluated the performance using TPC-H benchmark \cite{TPC:2017:h}, and tested equality queries with one singe predicates over `O\_CUSTKEY' field in `ORDERS' table.
In the following, all the results are averaged over $100$ trials.

\subsection{Group Generation}
\begin{table}[h]
\scriptsize
  \centering
    \caption{The storage overhead with different numbers of groups}
  \begin{tabular}{|l|l|l|l|l|}
     \hline
     \#Groups  &\#Dummy records &\#Elements in a group & \#Records in a group  \\ \hline
     1         & 2599836 & 99996 & =4099836  \\
     10        & 2389842 & 10000 & $\approx$38000 \\
     100       & 1978864 & 1000  & $\approx$35000  \\
     1000      & 1567983 & 100   & $\approx$3000  \\
     10000     & 1034007 & 10    & $\approx$240 \\
     \hline
   \end{tabular}
  \label{Tbl:oblidb-storage-perf}
\end{table}
In `ORDERS' table, all the `O\_CUSTKEY' elements are integers.
For simplicity, we divided the records into groups by computing $e~ mod~ b$ for each element $e$ in `O\_CUSTKEY' field.
Specifically, we divide the records into 1, 10, 100, 1000, and 10000 groups by setting $b=$1, 10, 100, 1000,  and 10000, respectively.

Table \ref{Tbl:oblidb-storage-perf} shows the number of required records and the number of elements included in a group when dividing the database into 1, 10, 100, 1000, and 10000 groups.
In particular, when all the records are in one group, $2599836$ dummy records are required in total, $\lambda=99996$, and the CSP has to search $4099836$ records for each query.
When we divide the records into more groups, fewer dummy records are required, fewer records will be searched by the CSP, but fewer elements will be contained in each group.
When there are $10000$ groups, only $1034007$ dummy records are required totally, $\lambda=10$, and the CSP just needs to search around $240$ records for each query.

\subsection{Query Latency}
\begin{figure}[htp]
\centering
\includegraphics[width=0.32\textwidth]{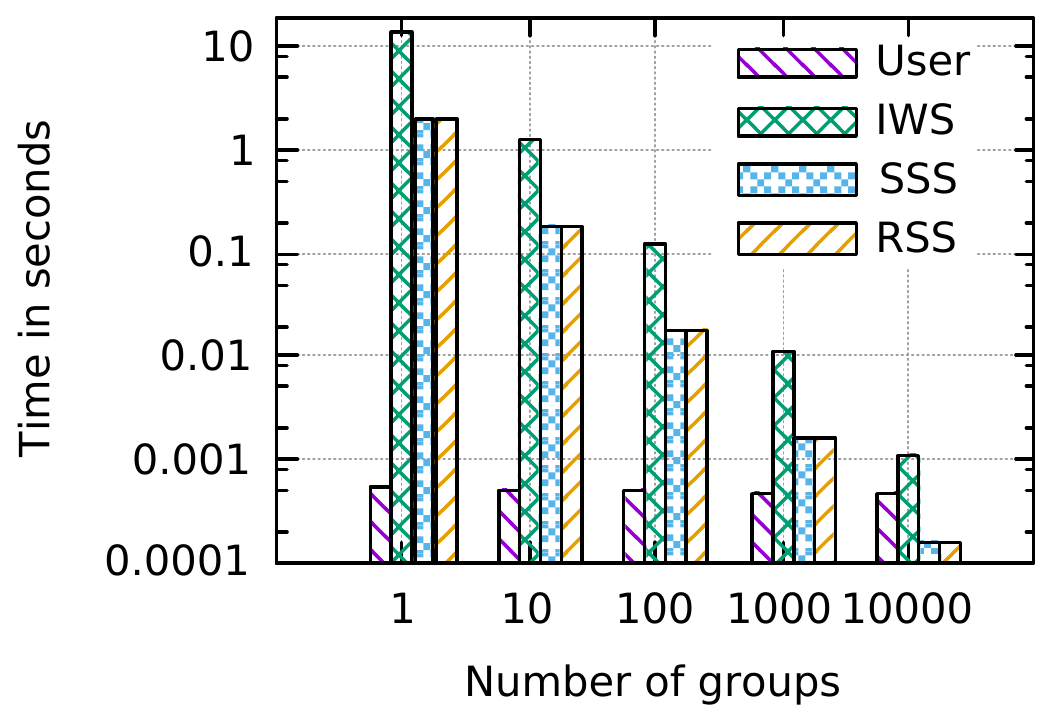}\\
\caption{The overhead on different entities with different group numbers}
\label{Fig:mcdb-entity}
\end{figure}
An important aspect of an outsourced service is that most of the intensive computations should be off-loaded to the CSPs.
To quantify the workload on each of the entities, we measured the latency on the user, IWS, SSS, and RSS for processing the query with different numbers of groups.
The results are shown in Fig.~\ref{Fig:mcdb-entity}.
We can notice that the computation times taken on the IWS, SSS, and RSS are much higher than that on the user side when there are less than 10000 groups.
In the following, we will discuss the performance on CSPs and the user in details.

\subsubsection{Overhead on CSPs}
\begin{figure}[htp]
\centering
\includegraphics[width=0.32\textwidth]{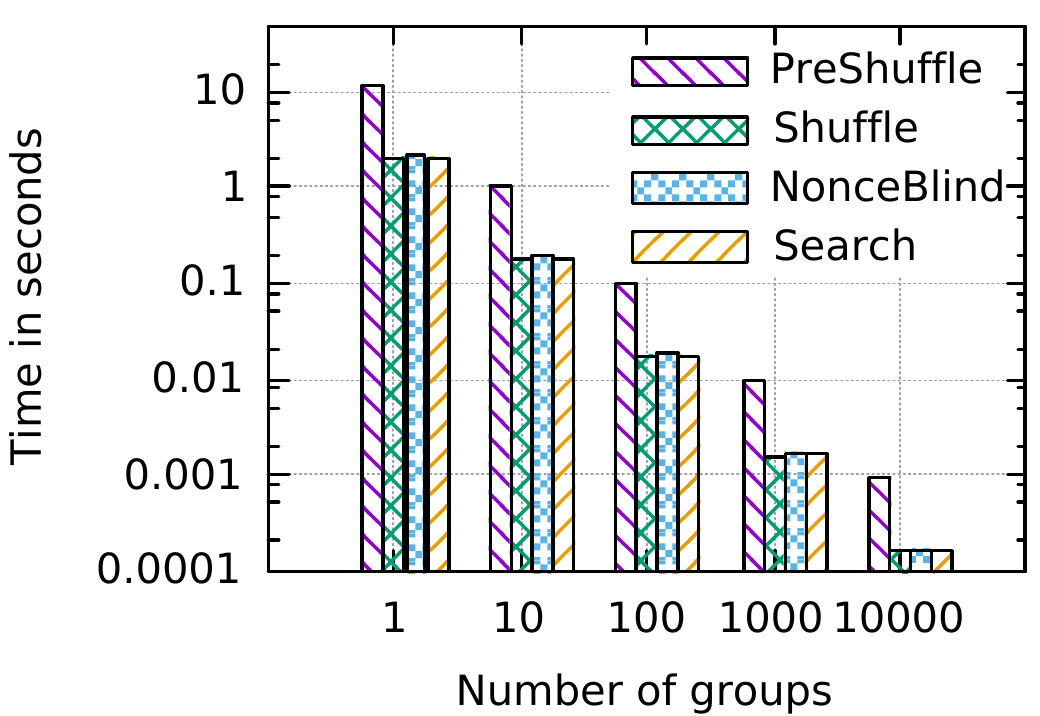}\\
\caption{The overhead on CSPs with different group numbers}
\label{Fig:mcdb-csps}
\end{figure}
Fig. \ref{Fig:mcdb-csps} shows the performance of the operations running in the CSPs when increasing the number of groups.
Specifically, in \framework, the IWS runs \emph{NonceBlind} and \emph{PreShuffle}, the SSS runs \emph{Search}, and the RSS runs \emph{Shuffle} algorithms.
We can notice that the running times of all the four operations reduce when increasing the number of groups.
The reason is that \framework only searches and shuffles a group of records for each query.
The more groups, the fewer records in each group for searching and shuffling.
Thanks to the efficient XOR operation, even when $g=1$, \ie searching the whole database (contains 4099836 records in total), \emph{NonceBlind}, \emph{Search}, and \emph{Shuffle} can be finished in around $2$ seconds.
\emph{PreShuffle} is the most expensive operation in \framework, which takes about $11$ seconds when $g=1$.
Fortunately, in \emph{PreShuffle}, the generation of new nonces (\ie Lines \ref{code:mcdb-seed'}-\ref{code:mcdb-gn'} in Algorithm \ref{alg:mcdb-shuffle}) is not affected by the search operation, thus they can be pre-generated.
By doing so, \emph{PreShuffle} can be finished in around $2.4$ seconds when $g=1$.

\subsubsection{Overhead on Users}
\begin{figure}[htp]
\centering
\includegraphics[width=0.32\textwidth]{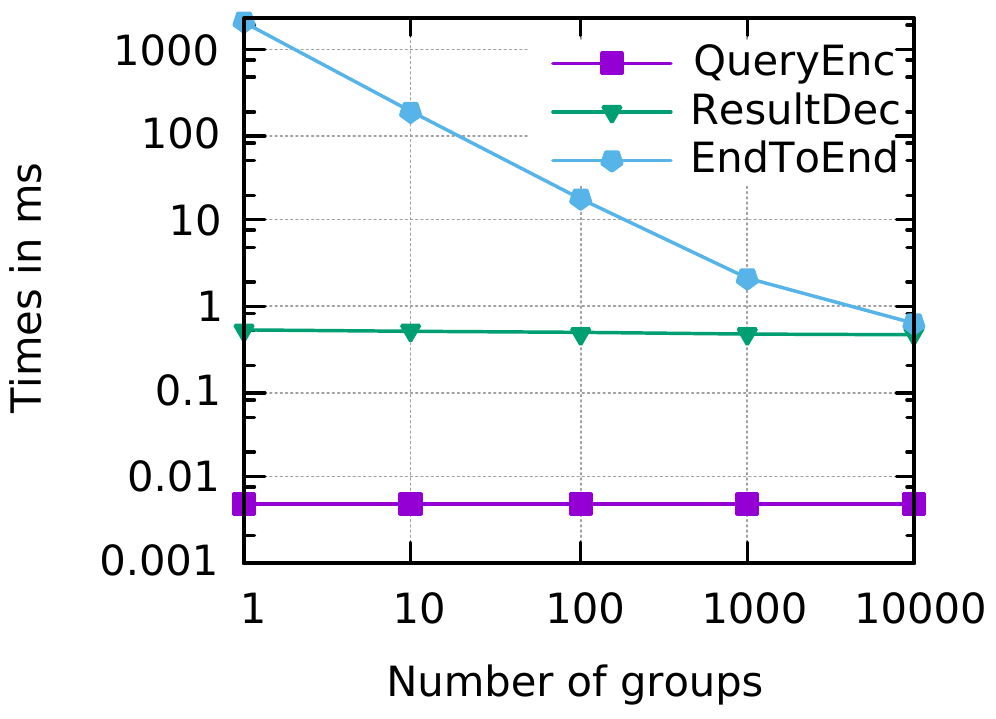}\\
\caption{The overhead on the user with different group numbers}
\label{Fig:mcdb-user}
\end{figure}
In \framework, the user only encrypts queries and decrypts results.
In Fig. \ref{Fig:mcdb-user}, we show the effect on the two operations when we change the number of groups.
The time for encrypting the query does not change with the number of groups.
However, the time taken by the result decryption decreases slowly when increasing the number of groups.
For recovering the required records, in \framework, the user first filters out the dummy records and then decrypts the real records.
Therefore, the result decryption time is affected by the number of returned real records as well as the dummy ones.
In this experiment, the tested query always matches 32 real records.
However, when changing the number of groups, the number of returned dummy records will be changed.
Recall that, the required dummy records for a group is $\sum_{e \in \bm{E}_{f, g}}(\tau_{f, g}-O(e))$, and the threshold $\tau_{f, g}=\max\{O(e)\}_{e \in \bm{E}_{f, g}}$.
When the records are divided into more groups, fewer elements will be included in each group.
As a result, the occurrence of the searched element tends to be closer to $\tau_{f, g}$, and then fewer dummy records are required for its padding.
Thus, the result decryption time decreases with the increase of the group number.
In the tested dataset, the elements have very close occurrences, which ranges from $1$ to $41$.
The number of matched dummy records are $9$, $9$, $2$, $1$, and $0$ when there are $1$, $10$, $100$, $1000$, and $10000$ groups, respectively.
For the dataset with a bigger element occurrence gap, the result decryption time will change more obviously when changing the number of groups.

\subsubsection{End-to-end Latency}
Fig. \ref{Fig:mcdb-user} also shows the end-to-end latency on the user side when issuing a query.
In this test, we did not simulate the network latency, so the end-to-end latency shown here consists of the query encryption time, the nonce blinding time, the searching time and the result decrypting time.
The end-to-end latency is dominated by the nonce blinding and searching times, thus it decreases when increasing the number of groups.
Specifically, the end-to-end query decreases from $2.16$ to $0.0006$ seconds when the number of groups increases from $1$ to $10000$.

In this test, we used one trick to improve the performance.
As described in Algorithm \ref{alg:macdb-search}, the SSS is idle before getting $(IL, EN)$.
Indeed, the IWS can send $IL$ to the SSS first, and then the SSS can pre-compute $temp_{id}=H'(EDB(id, EQ.f) \oplus EQ.e^*)$ while the IWS generating $EN$.
After getting $EN$, the SSS just needs to check if $temp_{id}=EN(id).w$.
By computing $(w, t)$ and $temp$ simultaneously, the user can get the search result more efficiently.
In this test, the SSS computed $t$ simultaneously when the IWS generating $EN$.
Note that the shuffle operation does not affect the end-to-end latency on the user side since it is performed after returning search results to users.

\subsection{Insert and Delete Queries}
\begin{figure}
  \centering
  \includegraphics[width=.32\textwidth]{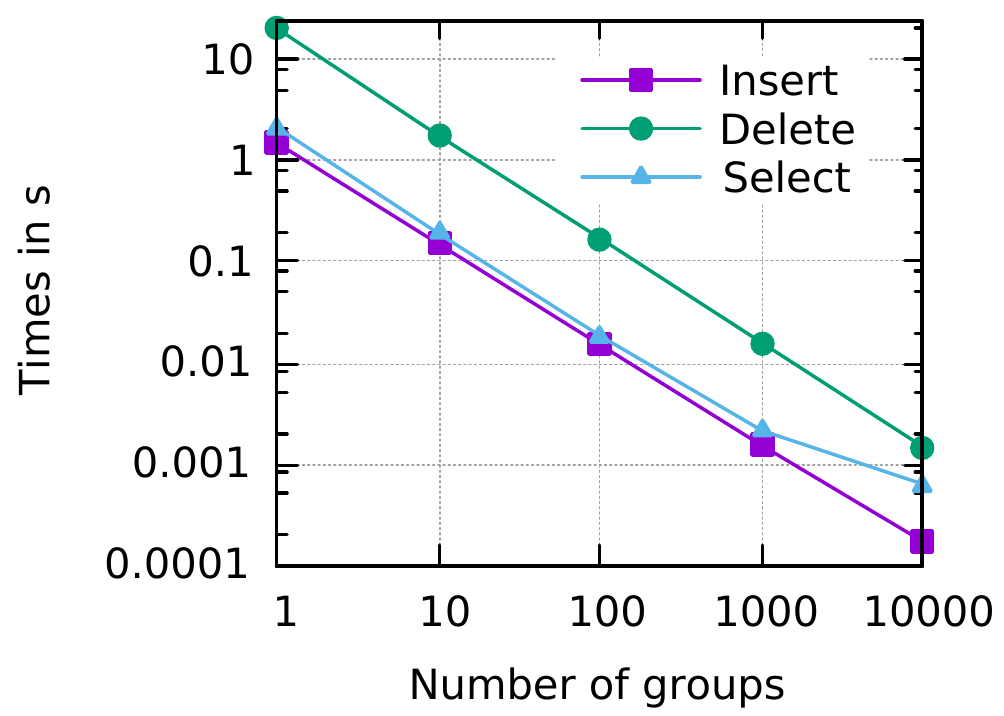}\\
  \caption{The execution times of the insert, delete and select queries with different numbers of groups}
  \label{Fig:mcdb-insert}
\end{figure}
\begin{figure}
  \centering
  \includegraphics[width=.3\textwidth]{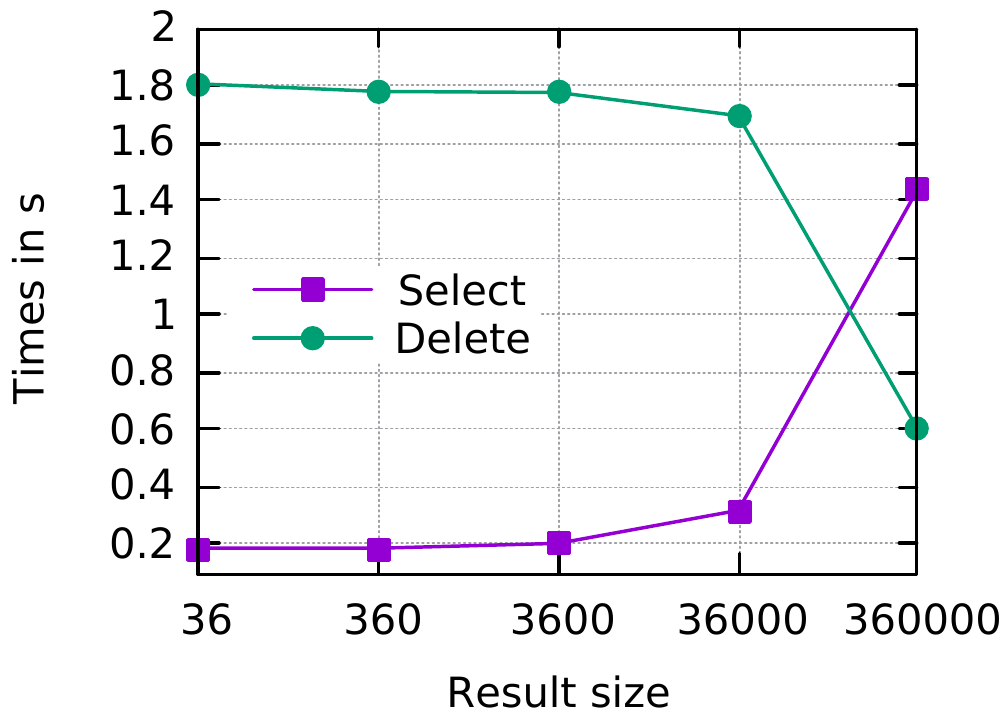}\\
  \caption{The execution times of the insert, delete, and select queries with different result sizes}
  \label{Fig:mcdb-insert-result}
\end{figure}
Since \framework is a dynamic SE scheme, we also tested its performance for insert and delete queries.

In Fig. \ref{Fig:mcdb-insert}, we show the execution times of insert and delete when changing the number of groups\footnote{The times taken by the $PreShuffle$ and $Shuffle$ algorithms are not included.}.
Moreover, we take the end-to-end latency of select queries as the baseline.
Fig. \ref{Fig:mcdb-insert} shows both insert and delete queries execute faster when there are more groups.
For insert queries, as mentioned in Section \ref{subsec:update}, $W=\max\{\gamma_{f}\}_{1 \leq f \leq F}$ dummy records should be inserted when inserting a real record.
Thus, the performance of insert queries is affected by the number of elements in involved groups.
When the database is divided into more groups, the fewer elements will be included in each group.
In this experiment, when there are 1, 10, 100, 1000, and 10000 groups, the user has to generate and encrypt 99996, 10000, 1000, 100, and 10 dummy records, respectively.
Specifically, when there is only one group, \framework takes only around $1.5$ seconds to encrypt and insert $99997$ records, which is slightly better than select queries.

For delete queries, \framework first performs the search operation to get the matched records, and then turn them to dummy by changing the tags.
Moreover, to hide the access pattern from the RSS, the user also needs to change the tags of all the other searched records.
The more groups, the fewer records should be searched, and the fewer tags should be changed.
Therefore, the performance of delete queries also gets better when there are more groups.
However, comparing with select queries, delete queries takes much longer time to change the tags.
Specifically, it takes around $20$ seconds to execute a delete query when there is only one group.

We also tested how the result size affects the performance of select and delete queries.
For this test, we divided the database into 10 groups, and the searched group contains $360000$ records.
Moreover, we manually changed the data distribution in the group to be searched to ensure that we can execute queries matching $36$, $360$, $3600$, $36000$, $360000$ records.
From Fig. \ref{Fig:mcdb-insert-result}, we can see that the performance of delete queries is better when the result size is bigger.
The reason is that tags of matched records are processed in a much efficient way than unmatched records.
Specifically, as mentioned in Section \ref{subsec:update}, the user directly changes the tags of matched records to random strings.
However, for each unmatched record, the user first has to detect if it is dummy or not, and then update their tags accordingly.
When all the searched records match the delete query, it takes only $0.6$ seconds to turn them to dummy.
Nonetheless, select queries take longer time when more records matching the query, since there are more records should be processed on the user side.

\section{Conclusions and Future Directions}
\label{sec:conclusion}
In this work, we presented the leakage profile definitions for searchable encrypted relational databases, and investigated the leakage-based attacks proposed int the literature.
We also proposed \framework, a dynamic searchable encryption scheme for multi-cloud outsourced databases.
\framework does not leak information about search, access, and size patterns.
It also achieves both forward and backward privacy, where the CSPs cannot reuse cached queries for checking if new records have been inserted or if records have been deleted.
\framework has minimal leakage, making it resistant to exiting leakage-based attacks. 

As future work, first we will do our performance analysis by deploying the scheme in the real multi-cloud setting.
Second, we will try to address the limitations of \framework.
Specifically, \framework protects the search, access, and size patterns from the CSPs.
However, it suffers from the collusion attack among CSPs.
In \framework, the SSS knows the search result for each query, and the other two knows how the records are shuffled and re-randomised.
If the SSS colludes with the IWS or RSS, they could learn the search and access patterns.
We will also consider the techniques to defend the collusion attack among CSPs.
Moreover, in this work, we assume all the CSPs are honest.
Yet, in order to learn more useful information, the compromised CSPs might not behave honestly as assumed in the security analysis.
For instance, the SSS might not search all the records indexed by $IL$, and the RSS might not shuffle the records properly.
In the future, we will give a mechanism to detect if the CSPs honestly follow the designated protocols.

\bibliographystyle{ieeetr}
\bibliography{SSE}

\end{document}